\documentclass[11pt,letterpaper]{article}
\usepackage[T1]{fontenc}
\usepackage[utf8]{inputenc}
\usepackage{mathtools,amssymb,amsthm}
\usepackage[margin=1in]{geometry}
\usepackage{booktabs,array}
\usepackage{graphicx}
\usepackage{placeins}
\usepackage{algorithm,algpseudocode}
\usepackage[numbers,sort&compress]{natbib}
\usepackage[hyphens]{url}
\usepackage[hidelinks]{hyperref}
\usepackage[nameinlink,noabbrev]{cleveref}

\makeatletter
\providecommand{\theHALG@line}{}
\renewcommand{\theHALG@line}{\thealgorithm.\arabic{ALG@line}}
\makeatother

\hypersetup{pdftitle={Deterministic Linear-Time Modular Subset Sum},
  pdfauthor={Phuoc Dinh Le and Kha Le},
  pdfsubject={Deterministic modular subset sum with compact multiplicities},
  pdfkeywords={modular subset sum, cyclic intervals, deterministic algorithms}}

\newtheorem{theorem}{Theorem}[section]
\newtheorem{lemma}[theorem]{Lemma}
\newtheorem{corollary}[theorem]{Corollary}
\newtheorem{proposition}[theorem]{Proposition}
\crefname{theorem}{theorem}{theorems}
\Crefname{theorem}{Theorem}{Theorems}
\crefname{lemma}{lemma}{lemmas}
\Crefname{lemma}{Lemma}{Lemmas}
\crefname{corollary}{corollary}{corollaries}
\Crefname{corollary}{Corollary}{Corollaries}
\crefname{proposition}{proposition}{propositions}
\Crefname{proposition}{Proposition}{Propositions}
\theoremstyle{definition}

\newtheorem{example}[theorem]{Example}
\theoremstyle{remark}

\newcommand{\Zm}{\mathbb Z_m}
\newcommand{\Oh}{\mathcal O}
\newcommand{\SortCost}{\operatorname{Sort}}

\newcommand{\Reach}{\mathsf{reached}}
\newcommand{\Par}{\mathsf{parent}}
\newcommand{\Val}{\mathsf{value}}
\newcommand{\Change}{\mathsf{change\_direction}}
\newcommand{\Extend}{\mathsf{extend\_and\_merge}}
\newcommand{\Rebuild}{\mathsf{rebuild\_cycles}}
\newcommand{\EMPTY}{\mathsf{EMPTY}}
\newcommand{\FULL}{\mathsf{FULL}}
\newcommand{\PARTIAL}{\mathsf{PARTIAL}}
\newcommand{\START}{\mathsf{START}}
\newcommand{\END}{\mathsf{END}}

\title{Deterministic Linear-Time Modular Subset Sum}
\author{%
  Phuoc Dinh Le\\
  \small Georgia Institute of Technology\\
  \small\href{mailto:lephuocdinh99@gmail.com}{\texttt{lephuocdinh99@gmail.com}}
  \and
  Kha Le\\
  \small Texas A\&M University\\
  \small\href{mailto:lephuocanhkha2003@gmail.com}{\texttt{lephuocanhkha2003@gmail.com}}}
\date{29 September 2026}

\begin{document}
\maketitle

\begin{abstract}
We give a deterministic $\Oh(m)$-time algorithm for exact modular
subset sum over every modulus $m$ on compact input: distinct residues
with multiplicities. It reports all reachable residues and answers one
target query, returning a witness when the target is reachable.
The algorithm uses $\Oh(m)$ auxiliary words on an arithmetic word-RAM.
This improves Pot\k{e}pa's deterministic
$\Oh(m\log m\,\alpha(m))$-time bound under the same input convention.
The running time matches the cost of explicitly reporting all $m$
reachability bits.

We represent reachable residues as runs along cycles of repeated addition.
Newly reached residues pay for scans of partial runs, and processing prime
factors in increasing order makes cycle rebuilding linear. A theorem on
subset sums of distinct units limits the number of adaptive boundary
batches to $\Oh(m^{3/4})$; radix sorting their $\Oh(m)$ total keys also
takes $\Oh(m)$ time.
\end{abstract}

\section{Introduction and main result}\label{sec:introduction}

Given a multiset of residues in $\Zm=\mathbb Z/m\mathbb Z$, exact modular
subset sum asks which residues are \emph{reachable} as sums of subsets
of the input, using each copy at most once and allowing the empty subset.
We use the compact input convention of Pot\k{e}pa~\cite{Potepa2021}:
at most $m$ records $(x,c_x)$, where the residues $x$ are distinct and
$c_x$ is the nonnegative number of available copies of $x$.
For this input convention, Pot\k{e}pa~\cite[Theorem~2]{Potepa2021}
gives a deterministic $\Oh(m\log m\,\alpha(m))$-time algorithm for all
reachable residues using $\Oh(m)$ space, where $\alpha$ is the inverse
Ackermann function. Our result removes the superlinear factor for every
modulus. Since we return a Boolean array of length $m$, $\Oh(m)$ time
matches the cost of reporting the full reachable set.

\begin{theorem}\label[theorem]{thm:main}
For every $m\ge1$, exact modular subset sum on compact input can be solved
deterministically in $\Oh(m)$ time and $\Oh(m)$ auxiliary words under the
arithmetic model specified below. The algorithm returns the
Boolean array of all reachable residues. Within the same time and space
bounds, it also answers a query for a target $t$: it reports that $t$ is
unreachable or returns witness counts $b_x$ satisfying
\[
0\le b_x\le c_x,\qquad
\sum_x b_xx\equiv t\pmod m,\qquad
\sum_x b_x\le m-1.
\]
The witness may instead identify each selected copy by its input record
and its copy number within that record.
\end{theorem}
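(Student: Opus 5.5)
The plan follows the outline in the abstract. We maintain the reachable set $S\subseteq\Zm$, starting from $S=\{0\}$, and insert the input copies one residue class at a time. For a residue $x$ with $g=\gcd(x,m)$, the map $y\mapsto y+x$ splits $\Zm$ into $g$ cycles of length $m/g$. Along each cycle, $S$ is a union of maximal runs. Adding up to $c_x$ copies of $x$ to a run extends it forward by up to $c_x$ steps, capped by the point where it meets the next run, and the two then merge. Each newly reached residue pays for the scanning work of one extension. We store $\Par$ and $\Val$ pointers for each newly reached residue, recording the predecessor and the number of copies used. The witness is then recovered by walking the parent chain from $t$ back to $0$. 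The chain has at most $m-1$ steps, because each step reaches a new residue. This gives $\sum_x b_x\le m-1$ and $b_x\le c_x$, and the copy-number form of the witness follows directly.

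The steps are as follows. (i) Group the records by $g=\gcd(x,m)$. Order the groups so that the cycle structure changes in a controlled way: process prime factors of $m$ in increasing order, with coarser cycles first. Rebuilding the cycle representation ($\Rebuild$) when passing to the next step is then a refinement or coarsening whose cost telescopes to $\Oh(m)$ overall; the lengths $m/p_1,\,m/(p_1p_2),\dots$ shrink geometrically. (ii) For each $x$ within a fixed cycle structure, run $\Extend$. A batch of runs is handled simultaneously, and we also need to reverse direction ($\Change$) when $x$ is replaced by $-x$ relative to the cycle orientation. We charge run scans to newly reached residues, and charge a boundary check that reaches nothing to the batch. (iii) Bound the number of adaptive boundary batches. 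Here we invoke the paper's theorem on subset sums of distinct units: once $\Theta(\sqrt{k})$ distinct units, or suitable sums of them, have been inserted, the reachable set within a coset is essentially full. Hence only $\Oh(m^{3/4})$ batches can fail to fill their coset. (iv) Sort the $\Oh(m)$ boundary keys of all batches by radix sort in base $\sqrt m$ in $\Oh(m)$ time, which removes the comparison-sort $\log$ factor.

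The main obstacle is step (iii) together with charging the wasted work. A batch can scan run boundaries without creating any new residue. We must show that the total number of such boundaries is $\Oh(m)$, or at least that the number of batches times their overhead is. I would first try an amortization argument: every boundary scanned is either destroyed by a merge or lies in a batch that is counted by the distinct-units theorem. Using this, the total key count is at most $\Oh(m)+\Oh(m^{3/4})\cdot\Oh(m^{1/4})$. Finally, the word-RAM arithmetic, namely gcd computations and modular inverses for cycle indexing, costs $\Oh(1)$ amortized per record via precomputed tables of size $\Oh(m)$. This keeps the space within $\Oh(m)$ words.
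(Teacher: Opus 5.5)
Your proposal has a genuine gap in step (ii). Two residues with the same gcd generate the same cycles as sets, but they traverse them in different orders. So unless $x'=\pm x$, the runs of $S$ in direction $x'$ are unrelated to its runs in direction $x$. For example, modulo $11$ the set $\{0,2,3,5\}$ has two runs in direction $3$ and three runs in direction $1$. Every new value therefore needs a fresh run decomposition, and your proposal has no mechanism for producing one. Rescanning costs $\Theta(M)$ per value. For prime $m$ with single copies of $1,\dots,\lfloor\sqrt m\rfloor$, that is already $\Theta(m^{3/2})$ in total.

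The missing idea is the paper's $\Change(d,x)$:
\begin{itemize}
\item In the old coordinates, $S+x$ is the old interval list shifted cyclically (by $k=xd^{-1}$ positions when $m$ is prime). So one two-pointer sweep enumerates $P=S\setminus(S+x)$ and $Q=(S+x)\setminus S$.
\item These are exactly the run starts and exclusive ends in direction $x$, and $|P|=|Q|=\delta$ is the number of residues the next copy of $x$ will reach.
\item Converting and sorting these $2\delta$ events rebuilds the runs. The old list has at most the previous update's $\delta$ runs.
\end{itemize}
Hence both the scan and the sorted keys (at most $2(m-1)$ keys in total) are paid for by newly reached residues. Your alternative amortization (``destroyed by a merge or counted by the units theorem'') yields no such key bound, and the $\Oh(m^{3/4})\cdot\Oh(m^{1/4})$ term has no support.

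Two quantitative steps also fail as written.

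First, the radix base. With your own bound of $\Oh(m^{3/4})$ adaptive batches, base-$\sqrt m$ radix sort pays $\Theta(\sqrt m)$ in counters per batch, which gives only $\Oh(m^{5/4})$. The base must be $\Oh(m^{1/4})$. The paper takes $D=\Theta(m^{1/8})$ with eight passes, so sorting costs $\Oh(V+BD)=\Oh(m+m^{7/8})$.

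Second, the rebuilding cost in step (i). The gcd classes present are arbitrary divisors of $m$, not a chain, and a change from $g$ cycles to $h$ cycles costs $\Oh(g+h)$. Visiting the classes modulo $m$ can therefore cost about $\sigma(m)$, and $\sigma(m)/m$ is unbounded, so the claimed telescoping does not hold. The paper instead works in $\mathbb Z_M$, identified with the subgroup $(m/M)\mathbb Z_m$, where $M=A_ip_i^b$ grows through the primes of $m$ in increasing order. Then every working gcd divides $A_i$, a stage costs $\Oh(\sigma(A_i))$, and $a_i\sigma(A_i)\le A_i(p_i^{a_i}-1)=A_{i+1}-A_i$ telescopes to $m-1$.

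Step (iii) also needs more than you state. It needs the estimate $\sum_{g\mid m}g^{-1/2}=\Oh(m^{1/4})$, applied to $\lceil 8\sqrt{m/g}\rceil$ records per class. It also needs every nonempty cycle to become exactly full, not ``essentially'' full, so that the guard skips the rest of the class after at most one zero-growth update.

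Finally, a smaller point about the witness. With multi-copy steps, ``at most $m-1$ steps'' does not by itself give $\sum_x b_x\le m-1$. You also need that the $j$ residues covered by a $j$-copy step were all first reached in that step.
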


\begin{corollary}\label[corollary]{cor:explicit}
For an explicit list of $n$ reduced residues that can be scanned twice, the time is
$\Oh(n+m)$ and the auxiliary space is $\Oh(m)$ words, excluding the
original input storage. This includes one requested witness using distinct
original list indices.
\end{corollary}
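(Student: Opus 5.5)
The plan is to reduce the explicit-list setting to the compact input convention of \Cref{thm:main} with one bucketing pass, and then lift the witness back to list indices with a second pass. For the first scan, allocate an array $C[0..m-1]$ of counters, initialized to zero in $\Oh(m)$ time. For each list element $a_i$, which is already reduced to $\{0,\dots,m-1\}$, increment $C[a_i]$. This costs $\Oh(1)$ per element. Each $c_x$ is at most $n$, so every counter fits in a word as long as the word size accommodates $n$; this holds under the arithmetic word-RAM we assume, since the list indices must themselves be addressable. A final sweep over $x$ emits the records $(x,C[x])$ with $C[x]>0$. There are at most $m$ such records with distinct residues, which is exactly the compact input. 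Running \Cref{thm:main} on this input gives the reachable array and the target answer in $\Oh(m)$ time and $\Oh(m)$ auxiliary words. Together with the scan, the total is $\Oh(n+m)$ time, and the only extra space is the $m$ counters.

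To get the witness, invoke \Cref{thm:main} in its form that returns counts $b_x$ with $0\le b_x\le c_x$, $\sum_x b_xx\equiv t$, and $\sum_x b_x\le m-1$. Store these counts in an array $B[0..m-1]$, which can reuse $C$. In the second scan over the list, for each index $i$, check whether $B[a_i]>0$. If so, output $i$ and decrement $B[a_i]$. Because each residue $x$ appears $c_x\ge b_x$ times in the list, every $B[x]$ reaches zero by the end of the scan. The output therefore consists of exactly $b_x$ distinct indices carrying residue $x$, for each $x$. Their sum is congruent to $t$ modulo $m$, and there are at most $m-1$ of them, so the output list of indices takes $\Oh(m)$ space. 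An equivalent route uses the copy-number form of the witness: the $j$-th copy of record $x$ corresponds to the $j$-th occurrence of $x$ in the list, which the same per-residue counting pass recovers.

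There is no serious obstacle here; the main point needing care is the accounting. First, the original list is never modified. It is only read twice, which is why the corollary requires the list to be scannable twice and excludes its storage from the space bound. Second, the counters must be $\Oh(1)$-word quantities, and they are, because $c_x\le n$. Third, the empty subset (the case $t\equiv 0$ with all $b_x=0$) is handled trivially: the witness is the empty set of indices.
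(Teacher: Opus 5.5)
Your proposal is correct and follows the paper's own proof: one scan builds the histogram, Theorem~\ref{thm:main} is applied to the resulting compact records, and a second scan takes the first $b_x$ list indices of each residue $x$. The only difference is cosmetic: the paper builds the \emph{capped} histogram directly, whereas you store raw counts $c_x\le n$ and rely on your word-size remark plus the theorem's internal cap, which is equally valid.
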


\paragraph{Prior work and attribution.}
Axiotis et al.~\cite{AxiotisEtAl2019} give a randomized near-linear
algorithm by accelerating Bellman's recurrence with linear sketches.
Its $\widetilde O(m)$ bound hides logarithmic factors: the explicit bound
is $\Oh(m\log^7 m)$~\cite{AxiotisEtAl2021}.
Cardinal and Iacono give an expected $\Oh(m\log m)$
algorithm; we use the expected-time statement of their revised full
version~\cite{CardinalIacono2021}. Axiotis et al.~\cite{AxiotisEtAl2021}
give further algorithms through dynamic strings, including a deterministic
bound that depends on the number of reachable residues.
Arithmetic progressions also appear in the constructive
Erd\H{o}s--Ginzburg--Ziv algorithms of Leung and Jo; Jo's prime-modulus
subroutine finds a witness for any target from $p-1$ nonzero residues
in $\Oh(p)$ time~\cite{Leung2025,Jo2026}.

Grouping residues by divisors also appears in the work of Koiliaris and
Xu~\cite{KoiliarisXu2019}. Recording a parent when a residue is first reached
is used to construct witnesses in earlier modular
algorithms~\cite{CardinalIacono2021}. The theorem on subset sums of distinct
invertible residues used here is due to DeVos, Goddyn, Mohar, and
\v{S}\'amal~\cite{DeVosEtAl2007}, following Vu~\cite{Vu2005}.

Unlike Pot\k{e}pa's shift-tree approach, we organize updates by gcd class
and a chain of subgroups. An additive-combinatorial theorem bounds the
remaining sorting work, so arrays, interval lists, and radix sorting
suffice to maintain exact reachability. On the tested input families,
this implementation is substantially faster than both shift-tree variants
(\cref{sec:validation}). From our review of the literature, this is the
first deterministic $\Oh(m)$-time algorithm for unrestricted exact modular
subset sum on compact input over arbitrary moduli.

\paragraph{Model and input conventions.}\label{sec:model}
We work with residues in $[0,m)$ on a word-RAM supporting constant-time
comparisons, bit operations, array access, and arithmetic, including integer
quotient, remainder, and modular multiplication. Compact records and
intermediate products occupy a constant number of words.
For nonzero records we use the safe cap
$\widehat c_x=\min\{c_x,m/\gcd(x,m)-1\}$; zero records are discarded
(\cref{lem:cap}).

\paragraph{Proof overview.}
Three ingredients give the linear bound. First, newly reached residues pay
for later scans of partial runs, in one cycle and then in multiple cycles
(\cref{sec:prime,sec:representation}). Second, the arithmetic of gcd
classes and prime-ordered subgroup stages makes cycle rebuilding linear
(\cref{sec:filtration}). Third, an additive-combinatorial theorem says that
any set of at least $8\sqrt L$ distinct units has subset sums equal to
$\mathbb Z_L$~\cite{DeVosEtAl2007}.
Together with a divisor-sum estimate, it limits the number of adaptive
boundary batches to $\Oh(m^{3/4})$. Radix sorting their $\Oh(m)$ total
keys also costs $\Oh(m)$ (\cref{sec:sorting}).
\Cref{sec:preprocessing,sec:witness} complete the preprocessing and witness
arguments.

\section{The core algorithm: one cycle}\label{sec:prime}

We first solve the case where $m=p$ is prime. Let $S$ be the set of
residues reachable using the input copies processed so far. Initially,
$S=\{0\}$, from the empty subset. We process one input copy $x$ with
the exact Bellman update:
\begin{equation}\label{eq:bellman}
 S\longleftarrow S\cup(S+x),\qquad S_{\mathrm{initial}}=\{0\}.
\end{equation}
Here $S+x=\{s+x\bmod p:s\in S\}$.
Ordinary dynamic programming (DP) scans $p$ cells per copy. To avoid
that scan, we choose a nonzero residue $d$, called the current
\emph{direction}. Consecutive residues in the cyclic order below differ by $d$:
\[
 \underbrace{0,\ d,\ 2d,\ \ldots,\ (p-1)d}_{\text{position }t\text{ contains }td\bmod p},
 \qquad t_d(z)=zd^{-1}\bmod p.
\]
Since $p$ is prime, this order visits every residue exactly once.
We view $S$ as a Boolean array in the current direction: position $t$
is occupied exactly when $td\bmod p\in S$.
The main idea is that adding $d$ advances one position.
The \emph{start} of a run is its first occupied position; its
\emph{exclusive end} is the first unoccupied position after it.
We store the maximal occupied \emph{runs} in an ordered interval list
\[
 \mathcal I_d=\{[\ell_1,r_1),[\ell_2,r_2),\ldots\},
\]
where the endpoints are positions in direction $d$, with
$0\le\ell_i<r_i\le p$ and $r_i<\ell_{i+1}$ for consecutive intervals.
A wrapping run
uses two half-open intervals in $[0,p)$; the full cycle uses a flag.
We also keep membership bits $\Reach[z]$ indexed by residue.
Let $r_d(S)$ count cyclic runs, with $r_d(\mathbb Z_p)=0$.

We perform each update in two steps:
\begin{equation}\label{eq:twooperations}
 (S,\mathcal I_d)
 \xrightarrow{\ \Change(d,x)\ }(S,\mathcal I_x)
 \xrightarrow{\ \Extend(x)\ }(S\cup(S+x),\mathcal I'_x).
\end{equation}
The first arrow preserves every membership bit; the second uses one copy
of $x$.

\begin{figure}[H]
\centering
\includegraphics[width=\textwidth]{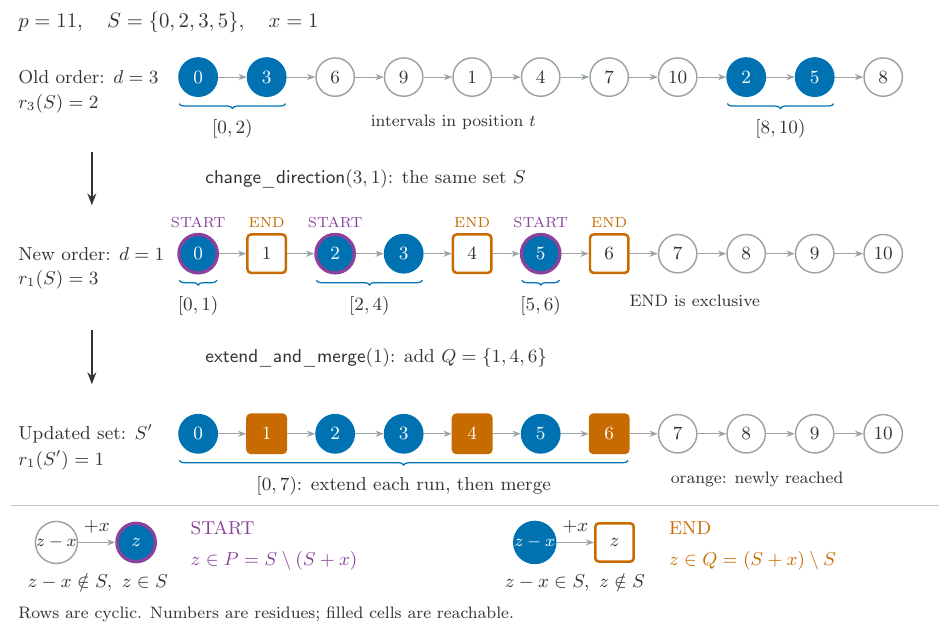}
\caption{One update modulo $11$. The subset sums of $(2,3)$ are
$S=\{0,2,3,5\}$. Changing direction $3\to1$ represents this same set by
three runs. Extending them adds $Q=\{1,4,6\}$ and merges them into one.
Cell labels are residues; interval endpoints are positions.}
\label{fig:prime}
\end{figure}

\subsection{Change direction from \texorpdfstring{$d$ to $x$}{d to x}}

In the new direction $x$, the predecessor of $z$ is $z-x$.
Therefore
\begin{align}
 z\text{ starts a run}
 &\iff z\in S,\ z-x\notin S,\nonumber\\
 z\text{ is an exclusive end}
 &\iff z\notin S,\ z-x\in S.\label{eq:predecessor}
\end{align}
Since $z-x\in S\iff z\in S+x$, the boundary sets are
\begin{equation}\label{eq:differences}
 P=\underbrace{S\setminus(S+x)}_{\START},\qquad
 Q=\underbrace{(S+x)\setminus S}_{\END},\qquad
 \delta:=|(S\cup(S+x))\setminus S|=|Q|=|P|.
\end{equation}
The last equality follows from $|S+x|=|S|$.

We assume the inverses $d^{-1}\bmod p$ for all directions used below
are precomputed in $\Oh(p)$ time, as shown in \cref{lem:preprocessing}.
Each coordinate conversion then takes constant time.
Each direction change produces one \emph{batch} of boundary keys, which we
radix-sort; \cref{sec:sorting} gives the
sorter and its total-time analysis. For now, write $\SortCost(k)$ for the
cost of a batch of $k$ keys, with $\SortCost(0)=0$.

\begin{lemma}[Direction change on one cycle]\label[lemma]{lem:primechange}
For $\varnothing\ne S\ne\mathbb Z_p$, changing direction $d\to x$
reconstructs the same set in $\Oh(1+r_d(S)+\delta)+\SortCost(2\delta)$
time. If $d=x$, no reconstruction or sorting is needed.
\end{lemma}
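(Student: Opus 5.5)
The plan is to compute the boundary sets $P$ and $Q$ of \eqref{eq:differences} directly from the run list $\mathcal I_d$, using interval arithmetic in the old coordinates. Then we sort the $2\delta$ boundary residues by their positions in direction $x$ and pair them up into $\mathcal I_x$. The case $d=x$ is immediate, since $\mathcal I_x=\mathcal I_d$ already represents $S$. So assume $d\ne x$, and let $k=t_d(x)=xd^{-1}\bmod p$. This takes constant time using the precomputed inverses from \cref{lem:preprocessing}. In direction $d$, adding $x=kd$ moves every position forward by exactly $k$ cyclically. Hence the positions of $S+x$ in direction $d$ are the runs of $\mathcal I_d$ shifted by $k$ modulo $p$.

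First I would build the shifted interval list $\mathcal I_d+k$ in sorted order in $\Oh(1+r_d(S))$ time. Add $k$ to every endpoint. Split the at most one interval that crosses position $p$ into two pieces, and rotate the list at the point where the shifted endpoints wrap around. Locating that point takes a single linear scan, and it yields two already-sorted sublists that we concatenate. Second, I would merge $\mathcal I_d$ with $\mathcal I_d+k$ in one simultaneous sweep, as with two sorted interval lists. The sweep outputs the interval lists of $S\setminus(S+x)$ and $(S+x)\setminus S$. This takes $\Oh(1+r_d(S))$ time and produces $\Oh(1+r_d(S))$ output intervals.

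The total length of the first output list is $|P|=\delta$ and that of the second is $|Q|=\delta$. Enumerating their elements therefore costs $\Oh(1+r_d(S)+\delta)$. For each enumerated position $t$, the residue is $z=td\bmod p$ and its new key is $t_x(z)=zx^{-1}\bmod p$, each computed in constant time. We tag each key as $\START$ or $\END$. Sorting these $2\delta$ keys costs $\SortCost(2\delta)$.

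The remaining step converts the sorted boundary list into $\mathcal I_x$. By \eqref{eq:predecessor}, going around the cycle in direction $x$, starts and ends must alternate. So each start is paired with the next end in cyclic order, giving one run per pair. If the last start has no following end, that run wraps around, and we record it as two half-open pieces $[\ell,p)$ and $[0,r)$. The hypothesis $\varnothing\ne S\ne\mathbb Z_p$ together with primality guarantees $S+x\ne S$, hence $\delta\ge1$. Indeed, a nonzero $x$ generates $\mathbb Z_p$, so only $\varnothing$ and $\mathbb Z_p$ are invariant under adding $x$. Consequently the full-cycle flag is not needed, $r_x(S)=\delta$, and the pairing is well defined. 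The membership bits $\Reach$ are never touched, so the set is unchanged. Summing the costs gives the bound $\Oh(1+r_d(S)+\delta)+\SortCost(2\delta)$.

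The step I expect to need the most care is the merge of $\mathcal I_d$ with its shift. It must handle both the wrapping run of $\mathcal I_d$ and the wrap created by the shift, so that both sorted lists are genuinely sorted in $[0,p)$ before the sweep. I would first normalize both lists into sorted half-open intervals in $[0,p)$, allowing one extra split each. After that normalization the sweep is the standard one, and the correctness of alternation follows directly from \eqref{eq:predecessor}.
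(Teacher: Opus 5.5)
Your proposal is correct and follows the paper's proof essentially step for step. You shift $\mathcal I_d$ by $k=xd^{-1}\bmod p$, split and rotate, run a two-pointer sweep to enumerate $P$ and $Q$, convert to keys $zx^{-1}\bmod p$, sort, and pair alternating starts and ends, handling the wrap-around case; your explicit argument that $\delta\ge1$ (only $\varnothing$ and $\mathbb Z_p$ are invariant under adding $x$) is a small clarification the paper leaves implicit, while the paper also notes omitting the empty piece $[0,0)$ when the first end has key $0$.
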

\begin{proof}
In the old coordinates, $S+x$ has the shifted interval list
\[
 k=xd^{-1}\bmod p,\qquad
 \mathcal I'_d=\{[\ell_i+k,r_i+k)\bmod p\}_i.
\]
We split at zero if needed, rotate the list, and merge touching pieces
so that $r'_i<\ell'_{i+1}$. This takes $\Oh(1+r_d(S))$ time.
We then use two pointers to sweep the intervals in $\mathcal I_d$ and
$\mathcal I'_d$ and enumerate $P$ and $Q$ in
$\Oh(1+r_d(S)+\delta)$ time.

Each emitted event stores a residue and its $\START/\END$ tag.
We convert its residue $z$ to the new key $zx^{-1}\bmod p$, then sort.
By \cref{eq:predecessor}, tags alternate around the cycle. We pair each
start with the next end to recover exactly the old runs. If the sorted
list begins with an end $e$, we pair it with the last start $s$ and
store $[0,e)$ and $[s,p)$, omitting the empty first interval when $e=0$.
The endpoints $0$ and $p$ merely split one cyclic run into two linear
intervals.
Conversion and reconstruction
cost $\Oh(\delta)$. All membership bits remain unchanged.
\end{proof}

In \cref{fig:prime}, the sorted events give
\[
 \underbrace{0}_{\START},\underbrace{1}_{\END},
 \underbrace{2}_{\START},\underbrace{4}_{\END},
 \underbrace{5}_{\START},\underbrace{6}_{\END}
 \quad\Longrightarrow\quad
 \mathcal I_1=\{[0,1),[2,4),[5,6)\}.
\]
\subsection{Extend and merge the runs}

Once the direction is $x$, we extend each run by one and merge touching runs:
\[
 [\ell,r)\longmapsto[\ell,r+1)\quad\text{cyclically},
 \qquad\text{then merge touching runs}.
\]
For the example,
\[
 \{[0,1),[2,4),[5,6)\}
 \longmapsto\{[0,2),[2,5),[5,7)\}
 \longmapsto\{[0,7)\}.
\]

\begin{lemma}[Growth pays for the next scan]\label[lemma]{lem:primegrowth}
Extension gives exactly $S'=S\cup(S+x)$ and
\begin{equation}\label{eq:primerunbound}
 r_x(S')\le\delta.
\end{equation}
It takes $\Oh(1+r_x(S)+\delta)$ time in the existing order.
\end{lemma}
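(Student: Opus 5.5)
We prove the three claims in turn: correctness, the run bound \cref{eq:primerunbound}, and the running time. All three follow from the fact that, in direction $x$, adding $x$ advances every position by exactly one.

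\emph{Correctness.} In direction $x$, position $t$ holds $tx\bmod p$, so $z\mapsto z+x$ moves position $t$ to $t+1\bmod p$. Hence a run $[\ell,r)$ of $S$ is mapped by $+x$ to $[\ell+1,r+1)$ cyclically. The union of the run with its image is the cyclic interval $[\ell,r+1)$. Taking the union over all runs gives exactly $S\cup(S+x)$. Merging touching intervals restores the invariant $r'_i<\ell'_{i+1}$ and the wrap convention. If every gap closes, we set the full-cycle flag. We also set $\Reach[(r\cdot x)\bmod p]$ for each old exclusive end $r$; this records exactly the new residues. By \cref{eq:differences}, these new residues form the set $Q$, of size $\delta$.

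\emph{Run bound.} We argue via the gaps. Every old run gains exactly one new position, its old exclusive end, and these positions are distinct. So the number of old cyclic runs equals $|Q|=\delta$, that is, $r_x(S)=\delta$ when $S\ne\mathbb Z_p$. Extending and merging can only keep or reduce the number of cyclic runs: each gap of $S$ shrinks by one, and a gap of length one disappears. Therefore $r_x(S')\le r_x(S)=\delta$. If $S'=\mathbb Z_p$, then $r_x(S')=0$ by convention. The one point needing care is to count \emph{cyclic} runs, not linear intervals. The split at $0$/$p$ changes the interval count by at most one but does not affect $r_x$. This bookkeeping, handling the wrapping run and the full-cycle flag consistently, is the only real obstacle.

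\emph{Time.} We make one pass over $\mathcal I_x$. For each interval we increment its right end (in $\Oh(1)$), and we keep a look-behind pointer to merge it with its successor when $r+1=\ell_{\text{next}}$. Finally we reconcile the two pieces across $0$/$p$ in $\Oh(1)$. The membership updates cost $\Oh(\delta)$. The total is $\Oh(1+r_x(S)+\delta)$ as claimed; in fact this is $\Oh(1+\delta)$, since $r_x(S)=\delta$.
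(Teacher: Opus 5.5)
Your proof is correct and follows essentially the same route as the paper: the exclusive ends of the old cyclic runs are exactly $Q$, so extension adds exactly $Q$; extending and merging never splits a cyclic run, which gives $r_x(S')\le r_x(S)=|Q|=\delta$; and a single sweep over the ordered intervals gives the time bound. Your correctness argument, which writes $S\cup(S+x)$ as the union over runs of $[\ell,r)\cup[\ell+1,r+1)=[\ell,r+1)$, is only a restatement of the paper's observation that extension adds precisely the old exclusive ends.
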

\begin{proof}
Distinct old runs have distinct exclusive ends, and the set of these ends
is exactly $Q$. Extending every run by one therefore adds exactly $Q$, giving
$S'=S\cup(S+x)$. Extension and merging cannot split a cyclic run, so
\[
 r_x(S')\le r_x(S)=|Q|=\delta.
\]
If the cycle becomes full, $r_x(S')=0$ by definition.

We extend and merge the ordered intervals in one sweep, taking
$\Oh(1+r_x(S))$ time. For each of the $\delta$ new residues $z\in Q$,
we record its parent $z-x\in S$ before updating membership, taking
$\Oh(\delta)$ time in total.
\end{proof}

For update $i$, let $S_i$ be the resulting set and
$\delta_i=|S_i\setminus S_{i-1}|$. Let $S_0=\{0\}$ and $x_0=1$.
Every executed update before full coverage has $\delta_i\ge1$, because
there is an unreachable exclusive end in direction $x_i$. Thus there
are at most $p-1$ updates. Since the result stays in direction $x_i$,
\begin{equation}\label{eq:primeamortization}
 r_{x_i}(S_i)\le\delta_i,\qquad
 \sum_i\delta_i\le p-1,\qquad
 \sum_i(1+r_{x_{i-1}}(S_{i-1})+\delta_i)=\Oh(p).
\end{equation}
The first scan starts from one run.

We process all copies of each value $x$ consecutively. The first copy
sets the current direction to $x$, so later copies need no direction change:
\[
 \Change(x,x)=\mathrm{id},\qquad
 \text{later copies use only }\Extend(x).
\]
Consequently, we sort boundaries at most once for each distinct input
value. This observation will remove the sorting factor in \cref{sec:sorting}.

\begin{algorithm}[H]
\caption{The prime-case loop, after preprocessing}
\label{alg:prime}
\begin{algorithmic}[1]
\State $S\gets\{0\}$; $d\gets1$; $\mathcal I_d\gets\{[0,1)\}$
\For{nonzero records $(x,\widehat c_x)$}
 \While{$\widehat c_x>0$ and $|S|<p$}
  \State $\Change(d,x)$
  \State $\Extend(x)$; $\widehat c_x\gets\widehat c_x-1$
 \EndWhile
\EndFor
\end{algorithmic}
\end{algorithm}

Thus all prime-case work outside boundary sorting is $\Oh(p)$.
\Cref{sec:sorting} will bound the sorting cost by $\Oh(p)$ as well.

\section{Arbitrary moduli: several cycles}\label{sec:representation}

The two operations in \cref{eq:twooperations} still apply. The new issue
is that one direction can have several cycles. We precompute
$\gcd(d,m)$ for all $0<d<m$ in $\Oh(m)$ time using
\cref{lem:preprocessing}. The normalization identities in
\cref{eq:normalized} give the gcds at smaller working moduli in constant
time. At working modulus $M\ge2$, we write
\begin{equation}\label{eq:coordinates}
 g=\gcd(d,M),\qquad L=M/g,\qquad u=d/g,\qquad v=u^{-1}\bmod L.
\end{equation}
Since $\gcd(u,L)=1$, the inverse $v$ exists.
The $g$ cycles are precisely the residue classes modulo $g$.
We write $\phi_d(c,t)$ for the residue at position $t$ in cycle $c$:
\begin{equation}\label{eq:rowmap}
 C_c=\{z\in\mathbb Z_M:z\equiv c\pmod g\},\qquad
 \phi_d(c,t)=c+g(ut\bmod L),\quad 0\le c<g,\ 0\le t<L.
\end{equation}
We call the stored representation of cycle $C_c$ row $c$.
Adding $d$ moves one position within a
cycle; the one-cycle argument therefore applies inside it.

\begin{center}
\begin{tabular}{@{}lll@{}}
\toprule
Cycle state & Stored representation & Runs counted\\
\midrule
$\EMPTY$ & empty flag & $0$\\
$\FULL$ & full flag & $0$\\
$\PARTIAL$ & maximal cyclic runs & number of runs\\
\bottomrule
\end{tabular}
\end{center}
Here $\EMPTY$ means that no residue in the cycle is reachable, $\FULL$
means that all are reachable, and $\PARTIAL$ means that some but not all
are reachable.
We write $R_d(S)$ for the sum of the one-cycle run counts over all partial
cycles; the capital letter distinguishes this total from $r_d(S)$ in
\cref{sec:prime}.
Each run needs one ordinary interval, or two if it wraps around zero,
so we store at most $2R_d(S)$ ordinary intervals, a list of partial cycles,
the membership array $\Reach$, and $|S|$. At later stages, $\Reach$ is
read through an original-residue array $F$ (\cref{sec:embeddings}).
Full and empty cycles both have no boundaries;
we distinguish them using flags or a membership query.

\begin{figure}[H]
\centering
\includegraphics[width=\textwidth]{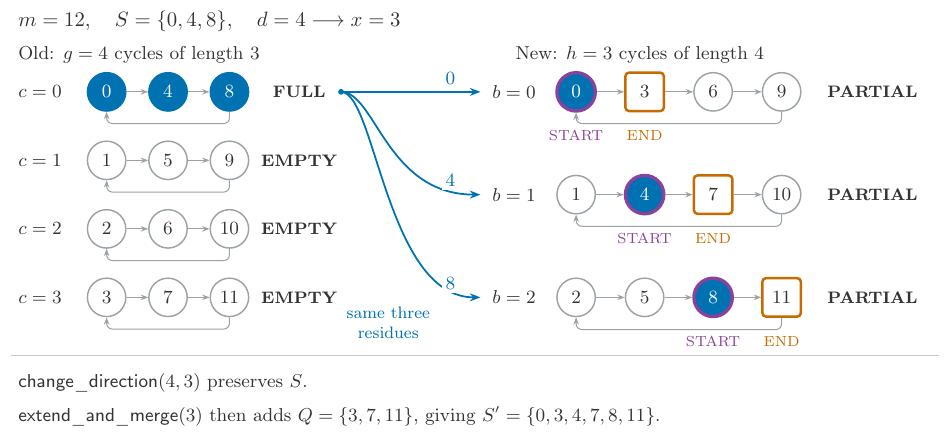}
\caption{The subset sums of $(4,4)$ modulo $12$ are $\{0,4,8\}$.
Changing direction $4\to3$ turns one full cycle into three partial cycles.
Both panels represent the same set. Only the following extension adds
$3,7,11$. We end the algorithm early once all cycles modulo $m$ become full.}
\label{fig:composite}
\end{figure}

\subsection{Find boundaries before rebuilding}\label{sec:update}

For a nonzero update $x$, let $h=\gcd(x,M)$ and use $P,Q,\delta$ from
\cref{eq:differences}, now modulo $M$. We define the helper
$\mathsf{sweep\_boundaries}(d,x)$ to return the tagged boundaries
$E=(P\times\{\START\})\cup(Q\times\{\END\})$.
It leaves the stored set and membership bits unchanged.

We write $\mathcal I_c$ and $\mathcal I'_c$ for the interval lists for
$S\cap C_c$ and $(S+x)\cap C_c$, both in the current direction $d$.
For this sweep, a full cycle has list $\{[0,L)\}$ and an empty cycle has
list $\varnothing$.

\emph{If $h=g$,} the cycle indices stay the same, so
\[
 k=(x/g)v\bmod L,\qquad
 \mathcal I'_c=\{[\ell+k,r+k)\bmod L:[\ell,r)\in\mathcal I_c\}.
\]
We only sweep partial cycles; full and empty cycles remain unchanged.

\emph{If $h\ne g$,} translation by $x$ sends the source list
$\mathcal I_c$ to cycle $c'$:
\begin{equation}\label{eq:carry}
 \begin{gathered}
 c'=(c+x)\bmod g,\qquad k_c=\frac{c+x-c'}g\,v\bmod L,\\
 \mathcal I'_{c'}=\{[\ell+k_c,r+k_c)\bmod L:[\ell,r)\in\mathcal I_c\}.
 \end{gathered}
\end{equation}

As in \cref{lem:primechange}, we split wrapping intervals, restore their
order, and use a two-pointer sweep of $\mathcal I_c$ and $\mathcal I'_c$.
Positions covered only by the first list give $P\cap C_c$; those covered
only by the second give $Q\cap C_c$, after mapping through $\phi_d(c,t)$.

\begin{lemma}[Boundary preprocessing]\label[lemma]{lem:shift}
The helper $\mathsf{sweep\_boundaries}(d,x)$ computes $E$ in
\[
 \begin{cases}
 \Oh(1+R_d(S)+\delta),&h=g,\\
 \Oh(g+R_d(S)+\delta),&h\ne g.
 \end{cases}
\]
\end{lemma}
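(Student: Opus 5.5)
The plan is to reduce to the one-cycle sweep of \cref{lem:primechange}, applied row by row, and then sum the per-row costs.

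\emph{Correctness.} For every cycle $C_{c'}$, positions in the shifted list $\mathcal I'_{c'}$ are exactly the positions of $(S+x)\cap C_{c'}$ in direction $d$. In the case $h=g$, translation by $x$ preserves each residue class modulo $g$. Also $x=g\cdot(x/g)$ advances the position $t$ by $(x/g)v\bmod L$, because $\phi_d(c,t)+x=c+g(u t+x/g)$ and $u^{-1}=v$ modulo $L$. This gives the shift $k$. In the case $h\ne g$, we write $c+x=c'+g\cdot\frac{c+x-c'}{g}$, and the same calculation gives the shift $k_c$ in \cref{eq:carry}. Since translation is a bijection, each target row $c'$ receives exactly one source row. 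Within a row, the positions covered by $\mathcal I_c$ but not by $\mathcal I'_c$ are exactly $(S\setminus(S+x))\cap C_c=P\cap C_c$, and symmetrically for $Q$. Applying $\phi_d$ returns the residues, so $E$ is produced exactly.

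\emph{Cost when $h=g$.} Full and empty cycles are invariant under the shift, so they contribute no boundaries and can be skipped. We iterate only over the stored list of partial cycles. For each partial row with $r_c$ runs, we shift the runs, split at $0$, rotate, and merge, which takes $\Oh(1+r_c)$ time. The two-pointer sweep then takes $\Oh(1+r_c+\delta_c)$, where $\delta_c$ is the number of boundaries emitted in that row. Since each partial row has $r_c\ge1$, the additive $1$ is absorbed. Summing over rows gives $\Oh(1+R_d(S)+\delta)$, where the extra $1$ covers the case with no partial rows.

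\emph{Cost when $h\ne g$ (the main obstacle).} Now a full or empty source row can land on a partial target row, or the reverse, and such a pair still emits boundaries. Every row therefore has to be touched, costing $\Oh(g)$ to compute $c'$ and $k_c$ and to look up states. Each row uses the convention that a full row is $\{[0,L)\}$, represented by one interval, and an empty row is $\varnothing$. The sweep of row $c'$ costs $\Oh(1+|\mathcal I_{c'}|+|\mathcal I'_{c'}|+\delta_{c'})$. The first list has size at most $2r_{c'}$, plus $1$ if the row is full. The second list is a shift of row $c$, with size at most $2r_c$, plus $1$. Summing over all $g$ rows, each row appears once as a source and once as a target, which gives $\Oh(g+R_d(S)+\delta)$.

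The subtle step is making sure the full-row placeholders cost only $\Oh(1)$ per row. I would also avoid spending $\Oh(L)$ time on a row where a full source lands on a full target. That sweep is $\Oh(1)$ anyway, because both lists are the single interval $[0,L)$ and no boundaries are emitted.
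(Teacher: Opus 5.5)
Your proposal is correct and follows essentially the same route as the paper. It uses the same shift computation $t'\equiv t+q_cv\pmod L$ from $uv\equiv1$, and it restricts the sweep to the partial-cycle list when $h=g$. When $h\ne g$, it visits all $g$ rows and scans each list once as a source and once as a destination. Your explicit per-row accounting, including the $\Oh(1)$ cost of the full-row placeholder $\{[0,L)\}$, just spells out what the paper absorbs into its $\Oh(g)$ term.
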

\begin{proof}
For $h=g$, we have $g\mid x$, so translation preserves each cycle and
shifts its positions by $k$.
Only partial cycles contribute differences. Their scans cost
$\Oh(1+R_d(S))$, and enumerating the $2\delta$ events costs $\Oh(\delta)$.

For $h\ne g$, write $q_c=(c+x-c')/g$. Then
$\phi_d(c,t)+x\equiv c'+g(ut+q_c)\pmod M$. Since $uv\equiv1\pmod L$,
the destination position is
\[
 t'\equiv(ut+q_c)v\equiv t+q_cv\equiv t+k_c\pmod L.
\]
Each destination has one source, so each partial list is scanned at
most twice: once as a source $\mathcal I_c$ and once as a destination
$\mathcal I'_{c'}$. Visiting the $g$ cycles, scanning their intervals, and enumerating
the $2\delta$ events costs $\Oh(g+R_d(S)+\delta)$.
\end{proof}

\subsection{Rebuild from the boundaries}

We define $\mathsf{convert\_and\_sort}(E,x)$ to assign a key to each tagged
event $(z,\tau)\in E$. Its new cycle and position are
\begin{equation}\label{eq:newcoordinates}
 \begin{gathered}
 h=\gcd(x,M),\quad K=M/h,\quad w=(x/h)^{-1}\bmod K,\\
 z\longmapsto
 \left(c=z\bmod h,\quad t=\frac{z-c}{h}w\bmod K,\quad
 \mathrm{key}=cK+t\right).
 \end{gathered}
\end{equation}
We retain $z$ and $\tau$ with the key, then return
$\mathsf{sort\_events}(E)$ (\cref{alg:sort}). This puts the events in new
cycle and position order. The $2\delta$ keys are distinct and lie in
$[0,M)$. The helper takes $\Oh(1+\delta)+\SortCost(2\delta)$ time.

For $g=h$, we define $\mathsf{rebuild\_partial\_cycles}(E)$ to pair the
sorted events of each partial cycle and retain the full/empty flags.

For $g\ne h$, we define $\Rebuild(g,h,E)$ to replace the old $g$ cycle
records with all $h$ new records, using the sorted events and unchanged
membership bits. Both helpers leave the stored records representing the
unchanged set $S$ in direction $x$.

\begin{lemma}[Reconstruction from boundaries]\label[lemma]{lem:boundary}
For each new cycle $C_c$, let $E_c$ be its events in sorted-key order.
Its membership
in $S$ is reconstructed by
\[
\begin{array}{c|l}
 E_c\ne\varnothing & \text{pair cyclically alternating starts and ends},\\
 E_c=\varnothing & \FULL\text{ if }\Reach[c]=1;\quad\EMPTY\text{ otherwise}.
\end{array}
\]
At a gcd change, $\Rebuild(g,h,E)$ costs $\Oh(g+h+\delta)$, including
replacement of the old cycle records. With equal gcds, rebuilding only
partial cycles costs $\Oh(\delta)$.
\end{lemma}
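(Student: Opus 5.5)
We prove correctness first and cost second. For correctness, fix a new cycle $C_c$ in direction $x$, with $h=\gcd(x,M)$ and positions $t$ as in \cref{eq:newcoordinates}. The predecessor of $z$ inside $C_c$ is $z-x$, which also lies in $C_c$. So \cref{eq:predecessor} holds verbatim inside each new cycle: the residue $z\in C_c$ carries a $\START$ tag exactly when it begins a maximal cyclic run of $S\cap C_c$, and an $\END$ tag exactly when it is the exclusive end of one.

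Suppose $S\cap C_c$ is partial. Walking once around the cycle in position order, we pass alternately from occupied to unoccupied stretches and back. Hence the tags in $E_c$ alternate and come in equal numbers. Pairing each start with the next end in cyclic order recovers exactly the runs, and a run that wraps is split at position $0$ as in the proof of \cref{lem:primechange}.

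Conversely, if $E_c=\varnothing$, then no position in the cycle changes status between consecutive positions. A cycle visits all of its positions, so $S\cap C_c$ is either all of $C_c$ or empty. The residue $c$ lies in $C_c$ at position $0$, so the single bit $\Reach[c]$ decides between $\FULL$ and $\EMPTY$. This bit is valid because membership bits are unchanged by $\mathsf{sweep\_boundaries}$ and $\mathsf{convert\_and\_sort}$. Keys are $cK+t$, so sorted order groups events by new cycle and orders them by position within each cycle. The contiguous block of sorted events for each $c$ is therefore exactly $E_c$ in the required order.

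For the cost at a gcd change ($g\ne h$), $\Rebuild(g,h,E)$ does three things:
\begin{enumerate}
\item It discards the $g$ old cycle records: their headers and partial-cycle list entries cost $\Oh(g)$, and their interval storage can be reset lazily or freed in time proportional to the intervals, which is $\Oh(R_d(S))$.
\item It allocates $h$ fresh records in $\Oh(h)$.
\item It makes one pass over the sorted list, which has $2\delta$ events. The pass advances a cycle pointer $c$ from $0$ to $h-1$; each cycle with no events gets one membership lookup, and each nonempty block is paired in time linear in its length. This costs $\Oh(h+\delta)$.
\end{enumerate}

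The main obstacle is the discarded old interval storage. Its size is $R_d(S)$, which is not among the terms $g+h+\delta$ in the claimed bound. I would handle it in one of two ways. The first is to charge the discarding to the $\Oh(R_d(S))$ sweep of \cref{lem:shift}, which already touched every old interval. The cleaner alternative is to store intervals in a pool that is reset by a generation counter, so that no old interval is ever touched again. With either choice the rebuild itself costs $\Oh(g+h+\delta)$.

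With equal gcds, the cycle indexing is the same in both directions. Full and empty cycles produce no events, so their flags are simply retained. Each partial cycle's block is then re-paired in time proportional to its event count, and the old interval list of that cycle is overwritten in place. The partial-cycle list is rebuilt from the blocks themselves, and cycles that were partial but receive no events are impossible: a partial cycle always has at least one start and one end. The total is $\Oh(\delta)$.
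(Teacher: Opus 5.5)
Your proof is correct and follows the paper's argument. Events mark the membership changes along each new cycle, and an event-free cycle has uniform membership, so it is classified by the bit at $c$. With equal gcds, every partial cycle carries events because the direction generates the cycle, and a gcd change adds only $\Oh(g+h)$ record work. Your explicit handling of the old interval storage, either by charging it to the sweep or by resetting a pool, fills in a detail the paper leaves implicit.
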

\begin{proof}
By \cref{eq:predecessor}, events mark each cycle's membership changes.
If there are no events, every position has the same membership, so the
unchanged bit $\Reach[c]$ determines its
$\FULL/\EMPTY$ flag. Otherwise, we pair alternating starts and ends.
Here $c\in C_c$, so the lookup uses a residue in the cycle being classified.

For $h=g$, we retain the flags and rebuild only partial cycles. Each
has events because $x/h$ generates $\mathbb Z_K$, so their total
reconstruction cost is $\Oh(\delta)$. For $h\ne g$, we also initialize
$h$ new cycle records and release $g$ old ones, adding $\Oh(g+h)$ work.
\end{proof}

\begin{algorithm}[H]
\caption{$\Change(d,x)$: preserve $S$, change its representation}
\label{alg:change}
\begin{algorithmic}[1]
\If{$d=x$} \State \Return \EndIf
\State $g\gets\gcd(d,M)$, $h\gets\gcd(x,M)$ \Comment{stored gcds}
\State $E\gets\mathsf{sweep\_boundaries}(d,x)$
\State $E\gets\mathsf{convert\_and\_sort}(E,x)$ \Comment{radix-sort new keys}
\If{$g=h$}
 \State $\mathsf{rebuild\_partial\_cycles}(E)$ \Comment{retain full/empty flags}
\Else
 \State $\Rebuild(g,h,E)$
\EndIf
\State $d\gets x$
\end{algorithmic}
\end{algorithm}

\subsection{Extend and merge the runs}

When $z$ first becomes reachable, we store its old reachable predecessor
in $\Par[z]$ and the copy value used in $\Val[z]$ for witness
reconstruction (\cref{sec:witness}).

\begin{lemma}[Extension and repeated directions]\label[lemma]{lem:growth}\label[lemma]{lem:repeat}
After $\Change(d,x)$, extending each partial run by one and merging
computes exactly $S'=S\cup(S+x)$, with
\begin{equation}\label{eq:runbound}
 R_x(S')\le\delta.
\end{equation}
Full and empty cycles keep their states. Extension costs
$\Oh(1+R_x(S)+\delta)$ and requires no sorting.
\end{lemma}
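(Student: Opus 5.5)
The plan is to reduce to the one-cycle argument of \cref{lem:primegrowth}, applied separately inside each cycle of direction $x$ at working modulus $M$. After $\Change(d,x)$, the stored records describe $S$ in coordinates $h=\gcd(x,M)$, $K=M/h$, and adding $x$ advances exactly one position within each cycle $C_c$ (by \cref{eq:newcoordinates}, $\phi_x(c,t)+x=\phi_x(c,t+1)$). Translation by $x$ therefore preserves every cycle, and
\[
 (S\cup(S+x))\cap C_c=(S\cap C_c)\cup\bigl((S\cap C_c)+x\bigr).
\]
So the update decomposes into independent one-cycle updates, one per cycle.

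First, handle full and empty cycles. If $S\cap C_c=C_c$ or $S\cap C_c=\varnothing$, then adding $x$ maps the set onto itself. The cycle keeps its state, contributes nothing to $Q$, and is never touched; this is why we only iterate over the list of partial cycles.

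Second, for a partial cycle, apply the argument of \cref{lem:primegrowth} verbatim with $p$ replaced by $K$. The exclusive ends of its maximal cyclic runs are exactly $Q\cap C_c$, by \cref{eq:predecessor}. Extending each run $[\ell,r)\mapsto[\ell,r+1)$ cyclically and merging touching runs adds exactly these ends. Hence the new cycle set is $(S\cap C_c)\cup((S\cap C_c)+x)$. Extension and merging cannot split a run, so the cycle's new run count is at most its old run count, which equals $|Q\cap C_c|$. If the cycle becomes full, we set the $\FULL$ flag and count $0$ runs. Summing over partial cycles gives
\[
 R_x(S')\le R_x(S)=\sum_c|Q\cap C_c|=|Q|=\delta,
\]
which is \cref{eq:runbound}.

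For the time bound, one sweep over the partial-cycle list and its ordered intervals costs $\Oh(1+R_x(S))$. Wrapping runs are handled by the two-interval convention, and merging the last interval with the first across zero is a constant-time check per cycle. Each of the $\delta$ new residues $z\in Q$ gets $\Par[z]=z-x$ and $\Val[z]=x$ recorded before its $\Reach$ bit is set, and $|S|$ is updated, for $\Oh(\delta)$ total. Since the interval lists are already ordered in direction $x$, no sorting is needed.

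The main obstacle is bookkeeping rather than mathematics. We must ensure the sweep touches only partial cycles, not all $h$ cycles, so that no $\Oh(h)$ term appears. This requires maintaining the partial-cycle list under state transitions: a cycle becoming full is removed in $\Oh(1)$, and no cycle can become partial from empty or full. The exclusive end $Q$ must also be enumerated while extending, so that parents are recorded using old membership.
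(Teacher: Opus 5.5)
Your proposal is correct and follows the paper's proof. Both apply \cref{lem:primegrowth} cycle by cycle, which uses only that $x$ generates each cycle of direction $x$, sum the per-cycle run bounds to get $R_x(S')\le\delta$, and obtain the time bound by visiting only partial cycles, each containing at least one run; your version simply spells out more of the bookkeeping (invariance of full and empty cycles, maintenance of the partial-cycle list, and recording parents before marking $Q$).
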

\begin{proof}
We apply \cref{lem:primegrowth} inside each partial cycle; its proof
uses only that the direction generates the cycle. Summing its run bounds
gives $R_x(S')\le\delta$. We visit only partial cycles, each containing
at least one run, so summing its time bounds gives
$\Oh(1+R_x(S)+\delta)$.
\end{proof}

\begin{algorithm}[H]
\caption{$\Extend(x)$: use one input copy in the current direction $x$}
\label{alg:extend}
\begin{algorithmic}[1]
\State $Q\gets$ exclusive-end residues of the old partial runs
\For{$z\in Q$}
 \State $\Par[z]\gets(z-x)\bmod M$, $\Val[z]\gets x$
\EndFor
\State Extend every old partial run by one; merge touching runs
\State Replace completed cycles by $\FULL$; update the partial-cycle list
\State Mark $Q$ reachable and set $|S|\gets|S|+|Q|$
\end{algorithmic}
\end{algorithm}

We record parents before marking $Q$ reachable and before any early return
on full coverage.
\begin{equation}\label{eq:parents}
 \Par[z]=(z-x)\bmod M\in S,\qquad \Val[z]=x\quad(z\in Q).
\end{equation}

\begin{proposition}[Cost of one exact update]\label[proposition]{prop:update}
Let $k=2\delta$ if $d\ne x$, and $k=0$ otherwise. One exact update takes
\[
 \Oh\!\left(1+R_d(S)+\delta+
 \mathbf1_{\{\gcd(d,M)\ne\gcd(x,M)\}}(\gcd(d,M)+\gcd(x,M))\right)
 +\SortCost(k)
\]
time. It preserves exact membership and valid parents and leaves at most
$\delta$ partial runs.
\end{proposition}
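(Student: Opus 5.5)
The plan is to compose the two halves of the update, $\Change(d,x)$ (\cref{alg:change}) followed by $\Extend(x)$ (\cref{alg:extend}), and add up the costs from \cref{lem:shift,lem:boundary,lem:growth}. The first step is the case $d=x$. Here $\Change$ returns at once, so no boundaries are swept and $k=0$. The cost is then only the extension cost $\Oh(1+R_x(S)+\delta)=\Oh(1+R_d(S)+\delta)$ from \cref{lem:growth}, and the gcd indicator is zero.

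For $d\ne x$ I would go through the lines of \cref{alg:change} one at a time. Reading the stored gcds costs $\Oh(1)$. Next, $\mathsf{sweep\_boundaries}(d,x)$ costs $\Oh(1+R_d(S)+\delta)$ when $g=h$, and $\Oh(g+R_d(S)+\delta)$ when $g\ne h$, by \cref{lem:shift}. Then $\mathsf{convert\_and\_sort}$ costs $\Oh(1+\delta)+\SortCost(2\delta)$, since it handles $|E|=|P|+|Q|=2\delta$ events. Rebuilding costs $\Oh(\delta)$ when $g=h$, and $\Oh(g+h+\delta)$ when $g\ne h$, by \cref{lem:boundary}. Finally, the extension costs $\Oh(1+R_x(S)+\delta)$.

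The one point that needs care is bounding $R_x(S)$, the run count of the \emph{unchanged} set $S$ read in the new direction, by the displayed quantities. By \cref{eq:predecessor}, every partial run in direction $x$ begins at a start event in $P$. Hence $R_x(S)\le|P|=\delta$, and the extension cost is $\Oh(1+\delta)$. Adding everything up, the $g+h$ terms appear only when $\gcd(d,M)\ne\gcd(x,M)$, and this gives exactly the stated bound with $k=2\delta$. I would also note the setup cost of the new coordinates: the values $w$ and $K$ need $\Oh(1)$ inverse lookups, which \cref{lem:preprocessing} provides.

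For correctness, \cref{lem:boundary} shows that after $\Change$ the stored records represent the same $S$ in direction $x$, with membership bits untouched. \cref{lem:growth} then shows that extension yields exactly $S'=S\cup(S+x)$ and leaves $R_x(S')\le\delta$ partial runs. For parent validity, each $z\in Q$ receives $\Par[z]=(z-x)\bmod M$. This lies in $S$ because $Q=(S+x)\setminus S$, and it is written before $Q$ is marked reachable, as in \cref{eq:parents}. Residues that were already reachable keep their earlier parents, so every parent points to a residue that was reachable strictly earlier. The step I expect to need the most attention is the inequality $R_x(S)\le\delta$ together with the early return on full coverage. Since parents are recorded before that return, I expect this to be a bookkeeping check rather than a real obstacle.
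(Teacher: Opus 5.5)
Your proposal is correct and takes the same approach as the paper. You combine \cref{lem:shift,lem:boundary,lem:growth}, and for $d\ne x$ you bound the extension cost by observing that the reconstructed representation of $S$ in direction $x$ has at most $|P|=\delta$ partial runs, exactly as the paper's one-line proof does; the line-by-line cost tally and the parent check via $Q=(S+x)\setminus S$ just spell out what the paper leaves implicit.
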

\begin{proof}
Combine \cref{lem:shift,lem:boundary,lem:growth}. When $d\ne x$, the
reconstructed old representation has $\delta$ partial runs, so extension
costs $\Oh(1+\delta)$. When $d=x$, use the existing runs directly.
\end{proof}

\section{Choosing the input order}\label{sec:filtration}

Rebuilding cycles when the gcd changes costs $\Oh(g+h)$. We can reorder
the input without changing its final subset sums, so we group residues
with the same gcd together. We process these groups within a chain of
successively larger subgroups so that all cycle rebuilding takes $\Oh(m)$
time in total.

\subsection{Working moduli and input order}\label{sec:embeddings}

For a working modulus $M\mid m$, let $\lambda=m/M$ be its \emph{scale}.
We represent the subgroup of multiples of $\lambda$ using the injective map
\[
 \iota_M:\mathbb Z_M\longrightarrow\mathbb Z_m,
 \qquad z\longmapsto\lambda z\bmod m.
\]
An input residue $x$ with $\lambda\mid x$ is represented by $y=x/\lambda$ in
$\mathbb Z_M$. Since $\iota_M$ preserves addition, subset-sum updates
in these coordinates agree with updates modulo $m$.

When the working modulus increases from $M$ to $M'=qM\mid m$, we
embed the current set by
\begin{equation}\label{eq:embedding}
 z\longmapsto qz,\qquad
 \iota_{M'}(qz)=\frac{m}{qM}(qz)=\frac mM z=\iota_M(z).
\end{equation}
Thus the embedding preserves every represented residue.

We choose the working moduli using the prime factors of $m$. Write
\[
 m=\prod_{i=1}^s p_i^{a_i},\qquad p_1<\cdots<p_s,\qquad
 A_i=\prod_{j<i}p_j^{a_j}.
\]
Starting from $M=1$, multiply by each prime in increasing order, with
repetitions. The $b$th occurrence of $p_i$ gives
\begin{equation}\label{eq:stage}
 M=A_i p_i^b,\qquad\lambda=m/M,\qquad 1\le b\le a_i.
\end{equation}
A record $x$ first becomes eligible when $\lambda\mid x$. Its preceding
scale was $p_i\lambda$, so
\[
 \lambda\mid x,\quad p_i\lambda\nmid x
 \quad\Longrightarrow\quad
 y=x/\lambda\in\mathbb Z_M,\quad p_i\nmid y.
\]
We call the records with a fixed original gcd $g=\gcd(x,m)$ a
\emph{gcd class}, and define
\[
 G=\{\gcd(x,m):x\text{ is a nonzero input residue}\}.
\]
Thus $|G|$ is the number of gcd classes present in the input.
We process each class consecutively in a \emph{gcd phase}, with equal
values consecutive.

\begin{example}[Working moduli for $m=12$]\label{ex:stages12}
The stages, scales, and original gcd classes are:
\[
\begin{array}{c|c|l|l}
 M&\lambda&\text{original gcd classes}&\text{normalized values }x/\lambda\\\hline
 2&6&(6)&(1)\\
 4&3&(3,9)&(1,3)\\
 12&1&(1,5,7,11)\mid(2,10)\mid(4,8)&\text{unchanged}
\end{array}
\]
Parentheses delimit classes, which may be processed in any order within
a stage; absent records are skipped. For input
$(4,3,6)$, this gives the processing order $(6,3,4)$.
At $M=2$, the input $6$ is represented by $1$, giving $S=\{0,1\}$.
Moving to $M=4$ multiplies each position by $2$:
\[
 \{0,1\}\subseteq\mathbb Z_2
 \quad\longmapsto\quad
 \{0,2\}\subseteq\mathbb Z_4.
\]
Both sets represent the original residues $\{0,6\}$, since the scale
changes from $6$ to $3$. We then process input $3$, represented by $1$,
giving $S=\{0,1,2,3\}$. Moving to $M=12$ multiplies positions by $3$:
\[
 \{0,1,2,3\}\subseteq\mathbb Z_4
 \quad\longmapsto\quad
 \{0,3,6,9\}\subseteq\mathbb Z_{12}.
\]
The scale is now $1$, so we process the remaining input $4$ directly.
\end{example}

\begin{lemma}[Unique stage and normalized coordinates]\label[lemma]{lem:stage}
Let $x\ne0$ be an input residue, and let $M=A_i p_i^b$ be its first
eligible working modulus. Set $\lambda=m/M$, $g=\gcd(x,m)$,
$y=x/\lambda$, and $h=\gcd(y,M)$.
Every input residue with original gcd $g$ enters at this same stage, and
\begin{equation}\label{eq:normalized}
 h=g/\lambda\mid A_i,\qquad
 M/h=m/g,\qquad y/h=x/g.
\end{equation}
\end{lemma}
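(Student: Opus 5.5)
We characterize the first eligible stage of $x$ purely in terms of $g=\gcd(x,m)$; this gives uniqueness of the stage for the whole gcd class, and the identities in \cref{eq:normalized} then follow by routine arithmetic. The scales visited are, in order, $\lambda=m/M$ for $M=A_ip_i^b$. They form a chain $m=\lambda_0,\lambda_1,\dots,\lambda_{\text{end}}=1$, each obtained from the previous one by dividing by the next prime in the increasing list with repetitions. The stage with scale $\lambda$ is the first one where $\lambda\mid x$. Since every scale divides $m$, we have $\lambda\mid x\iff\lambda\mid g$. Hence the first eligible stage is the first $\lambda$ in the chain dividing $g$, which depends only on $g$. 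This proves that every residue of the gcd class enters at the same stage.

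Next we identify that stage explicitly. Write $m=A_i p_i^{a_i}B_i$, where $B_i=\prod_{j>i}p_j^{a_j}$. The scale at stage $M=A_ip_i^b$ is $\lambda=p_i^{a_i-b}B_i$. The stage is first eligible, so $\lambda\mid g$ but the preceding scale $p_i\lambda$ does not divide $g$. (When $b=1$, the preceding scale is $p_{i-1}\cdot$ the current one if $i>1$, or $m$ itself; in every case the preceding scale is $p_i\lambda$, as the paper already notes.) Hence $v_{p_i}(g)=a_i-b$ exactly, and $v_{p_j}(g)=a_j$ for all $j>i$ because $B_i\mid g$. Therefore
\[
g/\lambda=\prod_{j<i}p_j^{v_{p_j}(g)},
\]
which divides $A_i$.

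It remains to show $h=\gcd(y,M)=g/\lambda$. Dividing $g=\gcd(x,m)$ by $\lambda$, which divides both $x$ and $m$, gives $\gcd(x/\lambda,m/\lambda)=g/\lambda$, that is, $\gcd(y,M)=g/\lambda$. Then $M/h=(m/\lambda)/(g/\lambda)=m/g$ and $y/h=(x/\lambda)/(g/\lambda)=x/g$.

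The only delicate point is making rigorous the claim that the first dividing scale in the chain has exactly this valuation profile. We would present the chain of scales explicitly as the sequence of quotients $m/M$ and argue by the monotone decrease of $p$-adic valuations. The identities themselves reduce to the single rule $\gcd(a/\lambda,b/\lambda)=\gcd(a,b)/\lambda$ for any common divisor $\lambda$ of $a$ and $b$.
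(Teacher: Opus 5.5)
Your proof is correct and follows essentially the paper's argument. The class-level uniqueness via $\lambda\mid x\iff\lambda\mid g$, the identity $h=\gcd(x/\lambda,m/\lambda)=g/\lambda$, and the quotient identities are the same as the paper's. Your $p$-adic valuation bookkeeping for $g/\lambda\mid A_i$ is a more explicit form of the paper's shortcut: $p_i\nmid y$ gives $p_i\nmid h$, and $h\mid M=A_ip_i^b$ then forces $h\mid A_i$.

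One small slip in your parenthetical: when $b=1$ and $i>1$, the preceding scale is $p_i\lambda$, not $p_{i-1}\lambda$. The preceding stage is the last $p_{i-1}$-stage, but its modulus is still $M/p_i$. Drop that clause and keep your conclusion that the preceding scale is always $p_i\lambda$.
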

\begin{proof}
Every scale divides $m$, hence $\lambda\mid x\iff\lambda\mid g$; the entire
class has the same first eligible stage. From
$\gcd(\lambda y,\lambda M)=\lambda\gcd(y,M)$ we get $h=g/\lambda$. First eligibility gives
$p_i\nmid y$, hence $p_i\nmid h$. Together with $h\mid M=A_i p_i^b$,
this gives $h\mid A_i$.
Finally, $M/h=(m/\lambda)/(g/\lambda)=m/g$ and $y/h=(x/\lambda)/(g/\lambda)=x/g$.
\end{proof}

We precompute $g=\gcd(x,m)$ and $v_x=(x/g)^{-1}\bmod(m/g)$ for each
input record using \cref{lem:preprocessing}. At scale $\lambda$, we obtain
the working gcd by one division, $h=g/\lambda$, and reuse the inverse directly:
\[
 (y/h)^{-1}\bmod(M/h)=(x/g)^{-1}\bmod(m/g)=v_x.
\]
Under an embedding $M\to qM$, the working residue and gcd become
$qy$ and $qh$. Their reduced direction and modulus remain
$qy/(qh)=y/h$ and $qM/(qh)=M/h$, so the inverse stays unchanged.

We store membership in an array $F$ indexed by original residues.
At scale $\lambda$, we read $\Reach[z]=F[\iota_M(z)]=F[\lambda z]$.
When moving from $M/p_i$ to $M$, we rebuild direction-one
intervals by scanning these $M$ bits. The embedding in
\cref{eq:embedding} uses $q=p_i$ and leaves $F$ unchanged.

\begin{algorithm}[H]
\caption{Exact modular subset sum on compact input}\label{alg:main}
\begin{algorithmic}[1]
\Require $m\ge1$, distinct reduced records $(x,c_x)$
\If{$m=1$} \State \Return the membership array $[1]$ and the empty witness \EndIf
\State Cap counts; form original gcd classes; compute their reduced inverses
\State Factor $m$; assign each class to its first eligible stage
\State Initialize the length-$m$ array $F\gets(1,0,\ldots,0)$ and sorting workspace
\For{stages $(M,\lambda,p_i)$ in increasing prime-factor order}
 \State Rebuild $S=\{z:F[\lambda z]=1\}$ in direction $d\gets1$
 \For{each original gcd class $g$ assigned to this stage}
  \State $h\gets g/\lambda$
  \For{each record $(x,\widehat c_x)$ in this class}
   \State $y\gets x/\lambda$, $r\gets\widehat c_x$
   \While{$r>0$ and $|S|<M$ and $(\gcd(d,M)\ne h\text{ or }R_d(S)>0)$}
    \State $\Change(d,y)$ \Comment{identity if $d=y$; leaves $d=y$}
    \State $\Extend(y)$ \Comment{write $F$ and parents in original coordinates}
    \State $r\gets r-1$
   \EndWhile
  \EndFor
 \EndFor
\EndFor
\State \Return $F$ and the first-discovery arrays $\Par_{\!0},\Val_{\!0}$
\end{algorithmic}
\end{algorithm}

The current direction's gcd is stored, so the guard costs $\Oh(1)$.
If it fails because no partial cycle remains, later records in that
class require only failed guards.

\subsection{Total cost of scans and rebuilding}\label{sec:algorithm}

Write $\tau(a)$ and $\sigma(a)$ for the number and sum of positive
divisors of $a$.
\begin{lemma}[Stage rebuilding and cycle changes]\label[lemma]{lem:transitions}
Rebuilding at stage changes and handling full and empty cycles at gcd-class
changes take $\Oh(m)$ time in total.
\end{lemma}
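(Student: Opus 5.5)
The plan is to charge every piece of work named in the statement to the working modulus $M$ of the stage in which it happens. First I show that one stage costs $\Oh(M)$. Then I sum over stages using the fact that consecutive working moduli grow by a prime factor.

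\emph{Stage rebuilding.} At a stage $M=A_ip_i^b$, \cref{alg:main} rebuilds $S$ in direction $1$ by scanning the bits $F[\lambda z]$ for $0\le z<M$. This takes $\Oh(M)$ time, since direction $1$ has $\gcd(1,M)=1$ and hence a single cycle. Along the increasing chain of stages each modulus is at least twice its predecessor, because every step multiplies by some $p_i\ge2$. The final modulus is $m$, so
\[
 \sum_{\text{stages}}M\le m\Bigl(1+\tfrac12+\tfrac14+\cdots\Bigr)\le 2m .
\]

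\emph{Gcd changes within a stage.} Records of one gcd class share the working gcd $h=g/\lambda$, so consecutive updates inside a class never trigger $\Rebuild$. A change of working gcd can therefore occur only once per class, when the first executed $\Change$ of that class runs. Its previous direction then has either gcd $1$ (the stage start) or the working gcd $h'$ of the previously processed class. By \cref{lem:boundary}, and via \cref{lem:shift} for the $\Oh(g)$ term of the sweep, this change costs $\Oh(h'+h)$ plus terms already paid for by $R_d(S)$ and $\delta$. This cost covers initializing the full/empty flags of the new records and releasing the old ones. Each class in the stage is charged at most twice: once as the incoming class and once as the outgoing one.

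By \cref{lem:stage}, distinct classes assigned to this stage have distinct working gcds, and each divides $A_i$. The total gcd-change cost of the stage is therefore $\Oh(M+\sigma(A_i))$. If a class's loop ends because of the guard, the direction does not change, and the charging argument still applies.

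\emph{Divisor-sum step (the main obstacle).} The crux is showing $\sigma(A_i)=\Oh(M)$, because $\sigma$ can exceed its argument by a non-constant factor. I would use the fact that $A_i$ contains only primes smaller than $p_i$:
\[
 \frac{\sigma(A_i)}{A_i}<\prod_{p<p_i}\frac{p}{p-1}\le\prod_{k=2}^{p_i-1}\frac{k}{k-1}=p_i-1 .
\]
The first inequality holds because $\sigma(p^a)/p^a<p/(p-1)$ for every prime power. The second enlarges the product to all integers $2\le k<p_i$, each factor being at least $1$, and the result telescopes. This gives $\sigma(A_i)<A_ip_i\le A_ip_i^b=M$.

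Hence each stage costs $\Oh(M)$, and the geometric sum above yields $\Oh(m)$ in total. The case $m=1$ and the empty-stage bookkeeping add $\Oh(1)$ per stage. There are at most $\log_2m$ stages, so this overhead is also $\Oh(m)$.
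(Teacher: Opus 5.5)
Your proof is correct and follows the paper's argument: stage rebuilding is bounded by the geometric sum of the moduli, each gcd change is charged $\Oh(g+h)$ for $\Oh(\sigma(A_i))$ per stage, and you use the same telescoping bound $\sigma(A_i)/A_i\le p_i-1$ (your first inequality there should be non-strict, since both sides equal $1$ when $A_i=1$ and $p_i=2$, but your conclusion $\sigma(A_i)<A_ip_i$ is unaffected). The only difference is the final summation: you absorb the gcd-change cost via $\sigma(A_i)<A_ip_i\le M$ into the same per-stage geometric sum, whereas the paper groups the $a_i$ stages of $p_i$ and telescopes $a_i\sigma(A_i)\le A_i(p_i^{a_i}-1)=A_{i+1}-A_i$, which needs the extra inequality $p_i^{a_i}-1\ge a_i(p_i-1)$ but gives the explicit bound $m-1$.
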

\begin{proof}
We first count rebuilding at stage changes. Entering a stage costs
$\Oh(M)$ time. Since the working moduli grow geometrically, the total
is $\Oh(\sum_{\text{stages}}M)=\Oh(m)$.

We next count gcd changes within a stage. By \cref{lem:stage}, every
working gcd divides $A_i$, and each gcd class is processed once.
A change from working gcd $g$ to $h$ costs $\Oh(g+h)$ by
\cref{lem:shift,lem:boundary}. Each working gcd appears at most once as
the new gcd and once as the old; the initial direction contributes $1$.
Thus one stage costs $\Oh(\sigma(A_i))$, and the $a_i$ stages for prime $p_i$ together cost
$\Oh(a_i\sigma(A_i))$.

To bound this sum, we use the increasing prime order. Every prime
factor of $A_i$ is smaller than $p_i$. Bounding each finite geometric
series by its infinite sum gives
\[
\begin{aligned}
\frac{\sigma(A_i)}{A_i}
&=\prod_{j<i}\left(\sum_{r=0}^{a_j}p_j^{-r}\right)
\le\prod_{\substack{q\mid A_i\\q\text{ prime}}}\frac q{q-1}\\
&\le\prod_{k=2}^{p_i-1}\frac k{k-1}=p_i-1.
\end{aligned}
\]
We also use a geometric sum with $a_i$ terms, each at least $1$:
\[
 p_i^{a_i}-1=(p_i-1)\sum_{j=0}^{a_i-1}p_i^j
 \ge a_i(p_i-1).
\]
Combining the two inequalities, we obtain
\[
 a_i\sigma(A_i)\le A_i a_i(p_i-1)
 \le A_i(p_i^{a_i}-1)=A_{i+1}-A_i.
\]
With $A_1=1$ and $A_{s+1}=m$, summing gives
\begin{equation}\label{eq:telescoping}
 \sum_i a_i\sigma(A_i)
 \le\sum_i(A_{i+1}-A_i)=m-1.
\end{equation}
Thus gcd changes also cost $\Oh(m)$ in total.
\end{proof}

\begin{lemma}[Total cost of updates and scans]\label[lemma]{lem:linearruns}
Skipping records as in \cref{alg:main} preserves the reachable set.
At most $2(m-1)$ keys are sorted in total. Loop checks, interval scans,
boundary enumeration, merging, and parent recording take $\Oh(m)$ total time.
\end{lemma}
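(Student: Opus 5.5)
We prove the three claims of \cref{lem:linearruns} in turn. All three rest on one potential argument: the newly reached residues $\delta_i$ pay for the runs scanned at the next update. The bounds from \cref{lem:transitions} are added at the end.

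\emph{Skipping preserves the reachable set.} The guard in \cref{alg:main} stops a record's loop in three situations.
\begin{itemize}
\item $r=0$: the capped count is exhausted, and \cref{lem:cap} says capping loses nothing.
\item $|S|=M$: every cycle is full, so $S+y\subseteq S$ and further copies change nothing.
\item $\gcd(d,M)=h$ and $R_d(S)=0$: there are no partial cycles in direction $d$. The current class has working gcd $h$, and $\gcd(y,M)=h$ by \cref{lem:stage}. Hence $y$ lies in the subgroup $h\mathbb Z_M$, and the cycles of direction $d$ are exactly its cosets. Each coset is full or empty, so $S$ is a union of cosets of $h\mathbb Z_M$ and $S+y=S$.
\end{itemize}
For the last case to apply, the class's gcd must equal the stored gcd of the direction. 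This holds after the first executed update of the class, since that update sets $d$ to a value with $\gcd(d,M)=h$. If the guard fails on the very first check with a different $d$ of the same gcd, the same coset argument still applies. This is why the guard tests the stored gcd.

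\emph{At most $2(m-1)$ sorted keys.} Sorting happens only when $d\ne y$, and each such batch has $2\delta$ keys by \cref{prop:update}. The $\delta$'s of all executed updates count residues newly reached, measured in original coordinates through $F$. The embedding \cref{eq:embedding} preserves represented residues, so each original residue is counted at most once across all stages. Residue $0$ is reached initially, so $\sum\delta\le m-1$ and the number of keys is at most $2(m-1)$.

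\emph{Linear scans.} Within a stage, index the executed updates by $i$. By \cref{prop:update}, update $i$ costs
\[
\Oh\bigl(1+R_{d_{i-1}}(S_{i-1})+\delta_i\bigr)
\]
plus gcd-change terms and sorting. Every executed update has $\delta_i\ge1$, because the guard ensures there is a partial cycle in the direction $y$ after the change. This holds either since $\gcd(d,M)\ne h$ and the set is not full, or since partial runs exist. Either way the boundary set $Q$ is nonempty: $|S|<M$ and $S$ is not a union of $\langle y\rangle$-cosets, so $S+y\ne S$. I expect this justification of $\delta_i\ge1$ under the guard to be the delicate point, in particular when the gcd changes and $S$ happens to be a union of cosets of the new subgroup. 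In that case I would absorb the $\Oh(1)$ update into the gcd-change cost $\Oh(g+h)$ already counted in \cref{lem:transitions}.

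Given $\delta_i\ge1$, the run count satisfies $R_{d_{i-1}}(S_{i-1})\le\delta_{i-1}$ by \cref{eq:runbound}. The first update of a stage starts from direction one, whose rebuild cost is $\Oh(M)$ and is already charged to stage entry. The update count and the total run scans are therefore both bounded by $\sum\delta_i$, mirroring \cref{eq:primeamortization}. Failed guard checks cost $\Oh(1)$ per record, and there are at most $m$ records. Summing over stages and using $\sum\delta\le m-1$ gives $\Oh(m)$ total.
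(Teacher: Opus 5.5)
Your proposal is correct and follows the paper's proof: the same coset argument for skipping, the same count $\sum_i\delta_i\le m-1$ in original coordinates for the $2(m-1)$ keys, and the same chaining of $R_x(S')\le\delta$ with the $\Oh(M)$ runs at stage entry. The claim that every executed update has $\delta_i\ge1$ is false, since $\gcd(d,M)\ne h$ lets the guard pass even when $S$ is already a union of cosets of $h\mathbb Z_M$; your fallback of charging such updates to the gcd changes of \cref{lem:transitions} is valid, and it replaces the paper's observation that a zero-growth update leaves $R_y(S')=0$, so at most one occurs per class and at most $m-1+|G|$ updates are executed.
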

\begin{proof}
Stage changes preserve original residues, so $\sum_i\delta_i\le m-1$.
If no partial cycle remains for a direction with gcd $h$, then $S$ is a
union of cosets of $h\mathbb Z_M$. Every remaining value $y$ in that gcd
class lies in $h\mathbb Z_M$, so $S+y=S$.
A zero-growth update leaves $R_y(S')=0$ by \cref{lem:growth}, so at most
one occurs per class. Hence
\[
 \#\text{executed updates}\le m-1+|G|=\Oh(m).
\]
Each input record adds at most one failed check, so loop checks also cost
$\Oh(m)$. Each direction change contributes $2\delta$ boundary keys.

A stage starts with at most $M$ runs; each later update $i$ scans at most
$\delta_{i-1}$ runs by \cref{lem:growth}. Thus
\[
 \sum_{\text{updates}}\text{runs before the update}
 \le\sum_{\text{stages}}M+\sum_i\delta_i<3m.
\]
Boundary enumeration and parent recording cost $\Oh(\sum_i\delta_i)$;
merging and interval maintenance are covered by the scans.
\end{proof}

Let $T$ be the total running time and $k_i$ the number of keys sorted in
update $i$, with $k_i=0$ when no sorting is needed. Combining the bounds gives
\begin{equation}\label{eq:sortingbudget}
 \boxed{\begin{gathered}
 V=\sum_i k_i\le2\sum_i\delta_i\le2(m-1),\\
 T=\Oh(m)+\sum_i\SortCost(k_i).
 \end{gathered}}
\end{equation}
It remains to sort $V$ keys in batches that depend on earlier updates.

\section{Linear total sorting time}\label{sec:sorting}

\subsection{Bounding the number of sorting batches}

Only changes of direction require sorting. We call each resulting list
of boundary events a \emph{batch} and bound their number using a
completeness theorem: once enough distinct directions have been processed
in a gcd class, its nonempty cycles are full and no further updates are needed.

For a finite set $A$ in an additive group, write
$\Sigma(A)=\{\sum_{a\in B}a:B\subseteq A\}$, including the empty sum zero.

\begin{theorem}[Subset sums of distinct units, cited]\label[theorem]{thm:units}
If $L\ge2$ and $A\subseteq\mathbb Z_L$ is a set of distinct units with
$|A|\ge8\sqrt L$, then its subset sums, including the empty subset, are
all of $\mathbb Z_L$.
\end{theorem}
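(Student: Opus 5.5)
Since \cref{thm:units} is labelled as cited, the plan is to reduce it to the published statement of DeVos, Goddyn, Mohar, and \v{S}\'amal~\cite{DeVosEtAl2007}, rather than reprove the additive combinatorics. I will not check the exact wording of their result here; I assume it appears as a corollary of their generalization of Kneser's theorem to subset sums, in roughly this form: a set $A$ of distinct units in $\mathbb Z_L$ with $|A|\ge8\sqrt L$ has $\Sigma(A)=\mathbb Z_L$. The first step is to confirm that both the constant $8$ and the threshold $|A|\ge8\sqrt L$ match the cited form. If their constant is stated differently, or the inequality is strict, I would adjust the constant used later in \cref{sec:sorting}; the $\Oh(m^{3/4})$ batch bound only needs some bound of the form $\Oh(\sqrt L)$.

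The second step is to reconcile the conventions. Our $\Sigma(A)$ includes the empty sum $0$. If their theorem counts only nonempty subsets, the conclusion is unaffected when it already gives all of $\mathbb Z_L$. If instead it gives $\mathbb Z_L\setminus\{0\}$, or gives every element for nonempty sums, then adding $0$ from the empty subset restores $\mathbb Z_L$. I would also note that the statement is vacuous for small $L$: there are only $\varphi(L)\le L$ units, and $8\sqrt L\ge L$ whenever $L\le64$. So only $L>64$ carries content, and the assumption $L\ge2$ merely rules out the degenerate group.

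For a reader who wants the mechanism, I would add a short heuristic (not part of the formal argument) explaining why units are the right hypothesis. Order $A=\{a_1,\dots,a_k\}$ and track $\Sigma_j=\Sigma(\{a_1,\dots,a_j\})$, with $\Sigma_{j}=\Sigma_{j-1}\cup(\Sigma_{j-1}+a_j)$. Growth stalls only if $\Sigma_{j-1}$ is nearly periodic with respect to a subgroup $H$ such that $a_j$ preserves the $H$-cosets. A unit generates $\mathbb Z_L$, so it lies in no proper subgroup. Distinctness then forces, in the Kneser-type analysis, a quadratic lower bound $|\Sigma(A)|\gtrsim\min\{L,c|A|^2\}$, which saturates at $|A|=\Theta(\sqrt L)$.

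I expect the main obstacle to be exactly this Kneser-type stabilizer analysis: ruling out that the sums concentrate on a union of cosets of a proper subgroup even though each individual unit escapes that subgroup. This is the content of~\cite{DeVosEtAl2007}, which is why the plan invokes it as a black box rather than reproving it.
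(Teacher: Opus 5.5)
Your proposal takes the same approach as the paper: the paper gives no independent proof of \cref{thm:units} and simply invokes the unit-set consequence that DeVos, Goddyn, Mohar, and \v{S}\'amal derive just after their Theorem~1.5, with the empty subset counted. Your side remarks are correct and consistent with how the paper uses the result: the hypothesis is vacuous for $L\le64$, the empty-sum convention is harmless, and \cref{cor:batches} needs only an $\Oh(\sqrt L)$ threshold. The one step left open is that you leave the cited location and constant as an assumption to be checked, whereas the paper names the precise place in the source.
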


This is the unit-set consequence derived just after Theorem~1.5 by
\cite{DeVosEtAl2007}.
We use it to bound the number of distinct records processed in a phase,
while computing every update and discovery exactly.

\begin{lemma}[Processed records in one gcd class]\label[lemma]{lem:classrecords}
In an original gcd class $g$, the algorithm processes copies from at most
$\lceil8\sqrt{m/g}\rceil$ distinct records.
\end{lemma}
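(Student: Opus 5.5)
The plan is to work inside the gcd phase for class $g$ at its stage $M=A_ip_i^b$, with scale $\lambda=m/M$ and working gcd $h=g/\lambda$ (\cref{lem:stage}). Every record in the class has working value $y=x/\lambda$ with $\gcd(y,M)=h$. We map the subgroup $h\mathbb Z_M$ isomorphically onto $\mathbb Z_L$ with $L=M/h=m/g$ via $hz\mapsto z$. Under this map $y$ goes to $y/h=x/g$, which is a unit of $\mathbb Z_L$. Distinct records $x$ in the class have distinct values $x/g$ modulo $m/g$, because $x<m$. So the processed records give a set $A$ of distinct units of $\mathbb Z_L$. If $L=1$ the class has a single possible record, and the bound $\lceil 8\rceil$ holds trivially, so we assume $L\ge2$.

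Next we show that the phase makes every cycle full or empty once enough records have been processed. Fix the reachable set $S_0$ at the start of the phase. Each processed record contributes at least one copy, so after the first $k$ distinct records the current set contains $S_0+h\cdot\Sigma(A_k)$, where $A_k$ is the set of their units and we identify $\Sigma(A_k)$ with its image in $h\mathbb Z_M$. This containment holds because each Bellman update is exact and monotone. Once $k\ge 8\sqrt L$, \cref{thm:units} gives $\Sigma(A_k)=\mathbb Z_L$. Then $S\supseteq S_0+h\mathbb Z_M$, and every update in the phase stays inside $S_0+h\mathbb Z_M$, so $S=S_0+h\mathbb Z_M$ exactly. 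That set is a union of cosets of $h\mathbb Z_M$, which are precisely the cycles for any direction of gcd $h$. So in the current direction every cycle is $\FULL$ or $\EMPTY$, and $R_d(S)=0$. We may take the current direction to be the last processed $y$, which has gcd $h$.

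Finally we read off the guard in \cref{alg:main}. Once $\gcd(d,M)=h$ and $R_d(S)=0$, no later record in the class passes the loop guard. Records processed earlier than the $k$th may also exhaust their copies or hit $|S|=M$, but that only helps. So copies come from at most the first $\lceil 8\sqrt{m/g}\rceil$ distinct records. One point needs care. A record counts as "processed" only if at least one of its copies is actually used, and the records whose copies are used must each contribute their unit to the subset-sum argument. Records skipped by a failed guard contribute nothing, but the guard only fails after completion, so the counting is consistent.

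The main obstacle is justifying that the isomorphism $h\mathbb Z_M\cong\mathbb Z_L$ turns "distinct records" into "distinct units". This requires checking that $x\mapsto x/g \bmod m/g$ is injective on the class and lands in the units, which follows from $\gcd(x/g,m/g)=1$.
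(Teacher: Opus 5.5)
Your proof is correct and follows essentially the same route as the paper: distinct records in the class give distinct units $x/g$ modulo $L=m/g$, exact updates give $S\supseteq S_0+h\Sigma(A)$, and once $|A|\ge 8\sqrt L$ the cited unit-set theorem makes every cycle for gcd $h$ full or empty, so the loop guard skips the rest of the phase. Your separate $L=1$ case is vacuous, since zero records are discarded and every nonzero class has $g<m$. Your remark that the guard ``only fails after completion'' is imprecise, because the guard can fail earlier, but an earlier failure only lowers the count.
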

\begin{proof}
For a class of nonzero residues, $g<m$. At its unique stage, its
normalized gcd $h$ gives $L=M/h=m/g\ge2$, and distinct records yield
distinct units $x/g$ modulo $L$.
Let $S_0$ be the reachable set at the start of the phase and $A$ the
reduced units whose first copies have been processed. These copies are
disjoint from those used to obtain $S_0$, so the exact updates give
\[
S\supseteq S_0+h\Sigma(A).
\]
When $|A|\ge8\sqrt L$, \cref{thm:units} gives $\Sigma(A)=\mathbb Z_L$.
For each $s\in S_0$, the set $s+h\mathbb Z_L$ is the entire cycle
containing $s$. Hence every cycle that was nonempty at entry becomes
full. All updates are multiples of $h$, so they cannot reach an initially
empty cycle. No partial cycles remain, and the rest of the phase is
skipped.
\end{proof}

To sum this bound over gcd classes, we need the following estimate on the
divisors of $m$.

\begin{lemma}[A sum over divisors]\label[lemma]{lem:weighted}
For every $m\ge1$,
\[
\sum_{g\mid m}g^{-1/2}\le C_0 m^{1/4},\qquad
C_0=\prod_{q\in\{2,3,5\}}(1-q^{-1/2})^{-1}.
\]
\end{lemma}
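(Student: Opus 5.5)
Since $g\mapsto g^{-1/2}$ is multiplicative, the sum factors over the prime powers of $m=\prod_i p_i^{a_i}$:
\[
 \sum_{g\mid m}g^{-1/2}=\prod_{i}\sum_{r=0}^{a_i}p_i^{-r/2}.
\]
It therefore suffices to bound each factor by $c_{p_i}\,p_i^{a_i/4}$ with constants satisfying $\prod_i c_{p_i}\le C_0$. I would show that for $p\ge7$ the constant can be taken to be $1$, and for $p\in\{2,3,5\}$ it can be taken to be $(1-p^{-1/2})^{-1}$. The product of the constants is then at most $C_0$, which gives the lemma. The case $m=1$ is trivial, since the sum is $1\le C_0$.

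For the small primes, I would bound the finite geometric series by its infinite sum, $\sum_{r=0}^{a}p^{-r/2}\le(1-p^{-1/2})^{-1}$. This is at most $(1-p^{-1/2})^{-1}p^{a/4}$ because $p^{a/4}\ge1$.

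For $p\ge7$, I must show $\sum_{r=0}^{a}p^{-r/2}\le p^{a/4}$ for every $a\ge1$; the case $a=0$ is trivial.
\begin{itemize}
\item The left side is less than $(1-p^{-1/2})^{-1}$, which for $p\ge7$ is at most $(1-7^{-1/2})^{-1}\approx1.608$.
\item For $a\ge2$, the right side is $p^{a/4}\ge p^{1/2}\ge\sqrt7\approx2.65$, which already suffices.
\item For $a=1$, we need $1+p^{-1/2}\le p^{1/4}$. At $p=7$ this reads $1.378\le1.627$. The left side decreases in $p$ and the right side increases, so it holds for all $p\ge7$.
\end{itemize}

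The main obstacle is pinning down the threshold prime. One must check that $p=5$ really needs the geometric-series constant while every $p\ge7$ does not. This is a finite numerical comparison, and the uniform bound $1.608<\sqrt7$ settles all $a\ge2$ at once. Small primes $p\in\{2,3,5\}$ that do not divide $m$ contribute a factor of $1$, which is at most their constant, so the bound $C_0$ holds regardless of which primes divide $m$.
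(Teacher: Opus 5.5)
Your proof is correct and follows the paper's route: factor the sum over prime powers, bound the factors for $2,3,5$ by the infinite geometric series, and show that each factor for $p\ge7$ is at most $p^{a/4}$. The only difference is in that last step. The paper avoids your split into $a=1$ and $a\ge2$ with the single inequality $\sqrt q/(\sqrt q-1)\le q^{1/4}$, which is equivalent to $q^{1/4}\ge(1+\sqrt5)/2$ and so holds exactly from $q=7$ on. Incidentally, your case split also works for $p=5$, since $1+5^{-1/2}<5^{1/4}$ and $(1-5^{-1/2})^{-1}<\sqrt5$. So your aside that $5$ ``really needs'' its constant is not right, though this does not affect the proof.
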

\begin{proof}
Every divisor chooses an exponent independently from each prime-power
factor of $m$, giving
\[
 \sum_{g\mid m}g^{-1/2}
 =\prod_{q^e\parallel m}\left(\sum_{j=0}^{e}q^{-j/2}\right),
\]
where $q^e\parallel m$ means that $q$ is prime and $e$ is its exponent in $m$.
For a prime $q\ge7$ and $e\ge1$, the geometric series gives
\[
 \sum_{j=0}^e q^{-j/2}
 \le\sum_{j=0}^{\infty}q^{-j/2}
 =\frac{1}{1-q^{-1/2}}
 =\frac{\sqrt q}{\sqrt q-1}
 \le q^{1/4}\le q^{e/4}.
\]
Here $\sqrt q-1\ge q^{1/4}$ because
$q^{1/4}\ge(1+\sqrt5)/2$ for $q\ge7$.
For each of the primes $2,3,5$, bound the corresponding sum by its infinite
geometric series. Multiplying over the prime-power factors of $m$ gives
the result.
\end{proof}

\begin{corollary}[Number of sorting batches]\label[corollary]{cor:batches}
The number $B$ of nonempty batches requiring sorting is
$\Oh(m^{3/4})$.
\end{corollary}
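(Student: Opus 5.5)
Each nonempty sorting batch comes from a direction change, that is, from an update whose current direction $d$ differs from the new value $y$. Since all copies of one record are processed consecutively and the first copy sets $d=y$, later copies of the same record never sort. So every nonempty batch is charged to the first processed copy of a distinct record. The one exception is the direction $d=1$ set when a stage is rebuilt, but that does not add batches: the first record of each class still pays for at most one batch. Hence $B$ is at most the total number of distinct records whose copies are actually processed, summed over all gcd classes present in the input.

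Next, bound that count class by class with \cref{lem:classrecords}. In an original gcd class $g\in G$, at most $\lceil 8\sqrt{m/g}\rceil\le 8\sqrt{m/g}+1$ distinct records are processed. Summing over $G$, which is a subset of the divisors of $m$, gives
\[
B\le\sum_{g\mid m}\bigl(8\sqrt{m/g}+1\bigr)=8\sqrt m\sum_{g\mid m}g^{-1/2}+\tau(m).
\]
By \cref{lem:weighted}, the first term is at most $8C_0\sqrt m\,m^{1/4}=\Oh(m^{3/4})$. For the second term, $\tau(m)\le 2\sqrt m=\Oh(m^{3/4})$, since divisors pair up as $g\leftrightarrow m/g$. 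Alternatively, one can charge the $+1$ only to classes where some record is processed; this avoids $\tau(m)$ but is unnecessary.

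The main point to check is the charging in the first step. One worry is that a record whose copies are all skipped by the guard in \cref{alg:main} creates no batch; that is fine, because such a record causes no direction change. The other is that \cref{lem:classrecords} counts precisely the records from which at least one copy is processed, and this matches the charging. Zero records were discarded, and the class with $g=m$ is absent, so \cref{lem:classrecords} applies to every class in $G$. Once the charging is settled, the bound $B=\Oh(m^{3/4})$ follows.
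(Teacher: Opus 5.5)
Your proposal is correct and follows the paper's proof. You charge each batch to the first executed copy of a record, bound the processed records per class by \cref{lem:classrecords}, and sum over divisors with \cref{lem:weighted}. The only difference is cosmetic: you use $\lceil 8\sqrt{m/g}\rceil\le 8\sqrt{m/g}+1$ and absorb the extra $\tau(m)\le 2\sqrt m$, whereas the paper uses $\lceil 8\sqrt{m/g}\rceil\le 9\sqrt{m/g}$ directly, since $m/g\ge1$.
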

\begin{proof}
By \cref{lem:repeat}, only a record's first executed copy can require a
sort. Every original gcd class occurs in one stage. Therefore
\[
B\le\sum_{\substack{g\mid m\\g<m}}\lceil8\sqrt{m/g}\rceil
\le9\sqrt m\sum_{g\mid m}g^{-1/2}=\Oh(m^{3/4}),
\]
using \cref{lem:classrecords,lem:weighted}.
\end{proof}

\subsection{Radix sorting every batch}

We use stable LSD radix sort for every nonempty batch with base
\[
 D=2^{\lceil(\log_2m)/8\rceil}=\Theta(m^{1/8}),\qquad m\ge2.
\]
Since $D^8\ge m$, keys in $[0,m)$ need at most eight passes.
A batch of size $k$ costs $\Oh(k+D)$ time, including counter resets.
We define the helper $\mathsf{sort\_events}(E)$ to return the events
in increasing key order, retaining each residue and tag.

\begin{algorithm}[H]
\caption{$\mathsf{sort\_events}(E)$, used by $\mathsf{convert\_and\_sort}$}
\label{alg:sort}
\begin{algorithmic}[1]
\State \Return stable LSD radix sort of $E$ by key in base $D$
\end{algorithmic}
\end{algorithm}

\begin{lemma}[Sorting all batches]\label[lemma]{lem:sorting}
Sorting the adaptive batches takes $\Oh(m)$ deterministic time and
$\Oh(m)$ auxiliary words, including initialization and counter resets.
\end{lemma}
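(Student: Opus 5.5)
We sum the per-batch radix-sort cost over all batches. Each nonempty batch of $k_i$ keys takes $\Oh(k_i+D)$ time: there are at most eight counting-sort passes, and each pass touches the $k_i$ records plus a counter array of length $D$. Summing over the $B$ nonempty batches gives
\[
\sum_i\SortCost(k_i)=\Oh\Bigl(\sum_i k_i+BD\Bigr)=\Oh\bigl(V+B\,m^{1/8}\bigr).
\]
By \cref{eq:sortingbudget}, $V\le2(m-1)$. By \cref{cor:batches}, $B=\Oh(m^{3/4})$, so $BD=\Oh(m^{3/4}\cdot m^{1/8})=\Oh(m^{7/8})\subseteq\Oh(m)$. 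Empty batches are skipped, consistent with $\SortCost(0)=0$, and batches with $d=x$ never call the sorter. We treat $m=1$ separately, since \cref{alg:main} returns immediately in that case.

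For the space bound, we allocate once, during initialization, a counter array of $D=\Oh(m^{1/8})$ words and a double buffer of $2(m-1)$ event slots. Each slot holds a key, a residue and a tag, so the buffer uses $\Oh(m)$ words, and every batch fits in it because $k_i\le V\le2(m-1)$. Initialization therefore costs $\Oh(m)$. We reset the counters before each pass in $\Oh(D)$ time, and this cost is already charged in the $\Oh(k+D)$ per-batch bound. Nothing of size $m$ is cleared per batch.

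For correctness of the radix sort, keys lie in $[0,M)\subseteq[0,m)$. This holds by \cref{eq:newcoordinates}, since $\mathrm{key}=cK+t<hK=M$. We extract digits by shifts and masks, which is valid because $D$ is a power of two, and $D^8\ge m$ means eight passes suffice. Stability of the counting passes gives correct LSD sorting. The residue and tag are carried along with each key, so the output is exactly the event order that \cref{lem:boundary} needs.

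The main obstacle is the $BD$ term: an unfavourable choice of $D$, or a per-batch reset of size $m$, would break linearity. Choosing $D=\Theta(m^{1/8})$ balances this term against the $\Oh(m^{3/4})$ batch bound with room to spare, and everything else is routine accounting.
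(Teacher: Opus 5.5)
Your proposal is correct and follows the paper's proof. You charge $\Oh(k_i+D)$ per nonempty batch and sum to $\Oh(V+BD)$, then use $V\le2(m-1)$, $B=\Oh(m^{3/4})$, and $D=\Theta(m^{1/8})$ to get $\Oh(m+m^{7/8})=\Oh(m)$, with a single shared $\Oh(m)$-word workspace initialized once. Your extra checks on the key range, digit extraction, and buffer size only fill in details the paper leaves implicit.
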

\begin{proof}
Using $V=\Oh(m)$ keys and $B=\Oh(m^{3/4})$ batches, we obtain
\[
 T_{\rm sort}=\Oh\!\left(\sum_{i:k_i>0}(k_i+D)\right)
 =\Oh(V+BD)=\Oh(m+m^{7/8})=\Oh(m).
\]
All batches share $\Oh(m)$ words of workspace, initialized in $\Oh(m)$ time.
\end{proof}

\section{Preprocessing}\label{sec:preprocessing}

The algorithms above use gcds and coordinate inverses as table lookups.
All tables are constructed once in $\Oh(m)$ time.

\begin{lemma}[Capping copies]\label[lemma]{lem:cap}
For nonzero $x$, replacing $c_x$ by
$\widehat c_x=\min\{c_x,m/\gcd(x,m)-1\}$ preserves all subset sums.
Zero records may be discarded.
\end{lemma}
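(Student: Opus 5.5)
The plan is to reduce the claim to an order-of-$x$ argument in the additive group. Fix a nonzero $x$ and set $n=m/\gcd(x,m)$, the additive order of $x$ in $\Zm$. A multiset containing $c$ copies of $x$ contributes exactly the sums $\{jx: 0\le j\le c\}$ to any subset sum, with the remaining elements contributing independently. So the reachable set is $\Sigma' + \{jx:0\le j\le c_x\}$, where $\Sigma'$ is the set of reachable sums of the other records. It therefore suffices to show that
\[
\{jx \bmod m:0\le j\le c_x\}=\{jx\bmod m:0\le j\le \widehat c_x\}.
\]

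If $c_x\le n-1$, then $\widehat c_x=c_x$ and there is nothing to prove. Otherwise $c_x\ge n$ and $\widehat c_x=n-1$. Since $nx\equiv0\pmod m$, every multiple $jx$ equals $(j\bmod n)x$ with $j\bmod n\in[0,n-1]$. Hence both sets equal the cyclic subgroup $\langle x\rangle$, which has $n$ elements $0,x,\ldots,(n-1)x$. The right-hand set is contained in the left one trivially, and the reverse inclusion is the reduction just described.

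For zero records, any number of copies of $0$ contributes only the sum $0$, so the set of sums is $\{0\}$, which the empty selection already gives. Discarding the record therefore preserves every subset sum.

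For witnesses, the statement is only about sets of sums, but I would note that a witness under the capped counts is automatically one under the original counts, since $\widehat c_x\le c_x$. Witness validity thus transfers directly.

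There is no real obstacle here. The only point needing care is justifying the product decomposition of the reachable set over records, and that follows from the subset sums of a union of disjoint multisets being the Minkowski sum of their subset sums.
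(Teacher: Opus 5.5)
Your proof is correct and follows essentially the same route as the paper: both use $nx\equiv 0\pmod m$ with $n=m/\gcd(x,m)$ to replace any chosen count $j$ by $j\bmod n\le\min\{j,n-1\}$. Your explicit Minkowski-sum decomposition over records and your separate treatment of zero records spell out steps that the paper leaves implicit.
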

\begin{proof}
Let $L=m/\gcd(x,m)$, so $Lx\equiv0\pmod m$. For any chosen number of
copies $j\le c_x$, set $r=j\bmod L$. Then
\[
 rx\equiv jx\pmod m,\qquad
 0\le r\le\min\{j,L-1\}\le\widehat c_x.
\]
Thus every original choice has an equivalent choice under the cap.
\end{proof}

We read the compact input into an array of size $m$ and apply the cap
once the gcd table is available. We never sum the input multiplicities,
so large counts cannot overflow a total-count accumulator.

\begin{lemma}[Gcd classes and coordinate inverses]\label[lemma]{lem:preprocessing}
All $g(x)=\gcd(x,m)$ for $0<x<m$, and the inverses
$(x/g(x))^{-1}\bmod(m/g(x))$ for nonzero input records,
can be computed in $\Oh(m)$ word operations
and auxiliary words, without a supplied factorization.
\end{lemma}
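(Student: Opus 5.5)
We would prove \cref{lem:preprocessing} in three passes: a linear sieve for smallest prime factors up to $m$, then the gcd table obtained from the factorization of $m$ read off that sieve, and finally the inverses. The main constraint is that we may not afford $\Oh(\log m)$ per extended-Euclid call in general. There is no bound on how many input records exist beyond $m$, so per-record logarithmic work could give $\Oh(m\log m)$.

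\emph{Step 1 (sieve and factorization).} We run the linear (Euler) sieve on $[2,m]$. It computes the least prime factor $\mathrm{lpf}(n)$ for every $n\le m$ with $\Oh(m)$ operations, since each composite is crossed out exactly once by its least prime. Repeated division of $m$ by $\mathrm{lpf}$ then gives $m=\prod p_i^{a_i}$ in $\Oh(\log m)$ further steps.

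\emph{Step 2 (gcd table).} We use multiplicativity: $\gcd(x,m)=\prod_i p_i^{\min(v_{p_i}(x),a_i)}$. We compute $g(x)$ for $x=1,\dots,m-1$ by dynamic programming over the sieve. Write $x=p\cdot x'$ with $p=\mathrm{lpf}(x)$ and $x'<x$. If $p\nmid m$, set $g(x)=g(x')$. Otherwise, set $g(x)=p\,g(x')$ when $p\,g(x')\mid m$, and $g(x)=g(x')$ otherwise. The divisibility test is one remainder operation, so each entry costs $\Oh(1)$. Correctness follows because multiplying by $p$ raises $v_p(x)$ by one, and the gcd gains the factor $p$ exactly when $v_p(g(x'))<a_p$, which is the condition $p\,g(x')\mid m$. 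Since $g(x')$ divides $m$, the table size stays $\Oh(m)$. Alternatively, we can mark multiples of each prime power $p_i^b\mid m$ and multiply in $p_i$. That costs $\sum_i\sum_b m/p_i^b=\Oh(m)$ because the terms are geometric per prime and $\sum_i m/p_i$ over the distinct primes of $m$ is $\Oh(m)$ as well. The DP above is cleaner.

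\emph{Step 3 (inverses), the main obstacle.} Inverses are needed modulo $L=m/g$ for each divisor $g$. Records in class $g$ give distinct units $u=x/g$ in $\mathbb Z_L$, so class $g$ has at most $\varphi(L)$ records. The first approach we would try is to compute, for each divisor $L$ of $m$ that actually occurs, all inverses in $\mathbb Z_L$ at once. For prime $L$ the standard recurrence $\mathrm{inv}[i]=-(L\bmod i)^{-1}\cdot\lfloor L/i\rfloor\bmod L$ works in $\Oh(L)$, but it breaks for composite $L$ because $L\bmod i$ may be a nonunit. The robust alternative is the batch-inversion trick: take the list of needed units $u_1,\dots,u_k$ modulo $L$, form prefix products, perform one extended-Euclid inversion costing $\Oh(\log L)$, and unwind the prefix products. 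This costs $\Oh(k+\log L)$ per modulus. Summing over occurring divisors gives $\Oh(\#\text{records}+\tau(m)\log m)$, and $\tau(m)\log m=\Oh(m)$ since $\tau(m)=m^{o(1)}$; even $\tau(m)\le2\sqrt m$ suffices. With at most $m$ records, the total is $\Oh(m)$. All products fit in a constant number of words under the modular-multiplication model, and the workspace is $\Oh(m)$ arrays reused across classes. We expect the only delicate point to be verifying that the unwinding step yields each individual inverse; that is routine.
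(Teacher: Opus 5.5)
Your proposal is correct and follows the paper's proof. It uses the same linear-sieve recurrence for the gcd table, where your test $p\,g(x')\mid m$ is exactly the paper's condition $p\mid m/a$ with $a=g(x')$. It also uses the same per-class batch inversion by prefix products and one extended Euclidean step, for $\Oh(m+\tau(m)\log m)=\Oh(m)$ in total.

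One side remark is wrong, though your argument does not use it: the prime-power marking alternative you set aside is not linear, because $\sum_{p\mid m}1/p$ is unbounded (about $\log\log\log m$ for primorial $m$).
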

\begin{proof}
For $m\ge2$, build the smallest-prime-factor table $\operatorname{spf}$
by a linear sieve: generate $pz<m$ only for primes
$p\le\operatorname{spf}(z)$. Each composite is generated once with its
smallest prime factor, giving $\Oh(m)$ work.

Set $D(0)=m$, $D(1)=1$. For $2\le z<m$, let
$p=\operatorname{spf}(z)$ and $a=D(z/p)$. Then
\begin{equation}\label{eq:gcdtable}
 D(z)=\begin{cases}pa,&p\mid(m/a),\\a,&p\nmid(m/a).\end{cases}
\end{equation}
Here $a=\gcd(z/p,m)$ is already known. Multiplying $z/p$ by $p$
changes no other prime exponent. If $p\mid(m/a)$, the gcd can gain
one more factor of $p$, giving $pa$. Otherwise, $a$ already contains
the full power of $p$ dividing $m$, so the gcd remains $a$.
Thus $D(z)=\gcd(z,m)$, with constant work per entry.
Group present records by $D(z)$ in another $\Oh(m)$ scan.

For a nonempty gcd class $g$, set $L=m/g$ and $u_j=x_j/g$ for its $k$ records.
We recover all inverses from prefix products and one extended Euclidean inversion:
\[
 \Pi_0=1,\quad\Pi_j=\Pi_{j-1}u_j\bmod L,\quad T=\Pi_k^{-1}\bmod L;
\]
\[
 \text{for }j=k,k-1,\ldots,1:\qquad
 u_j^{-1}=\Pi_{j-1}T\bmod L,\quad T\gets Tu_j\bmod L.
\]
Before iteration $j$, $T=\Pi_j^{-1}$. The prefix products give
\[
 \Pi_{j-1}T\equiv u_j^{-1},\qquad Tu_j\equiv\Pi_{j-1}^{-1}\pmod L.
\]
The first identity gives the requested inverse; the second maintains
the invariant for the next iteration. A class therefore costs
$\Oh(k+\log L)$: linear work for its products and one Euclidean inversion.
Writing $k_g$ for the number of records in class $g$, the total is
\[
 \Oh\!\left(\sum_g k_g+|G|\log(m+1)\right)=\Oh(m),
 \qquad \sum_g k_g\le m-1,\quad |G|\le\tau(m)\le2\sqrt m.
\]
Prefix products and the residue-indexed tables occupy $\Oh(m)$ words.
By \cref{eq:normalized},
these same inverses apply after every normalization.
\end{proof}

\begin{lemma}[Stage preprocessing]\label[lemma]{lem:stagepre}
Factoring $m$, assigning stages, and grouping records take $\Oh(m)$
time and auxiliary words.
\end{lemma}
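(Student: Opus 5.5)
Factoring $m$ will reuse the smallest-prime-factor table $\operatorname{spf}$ from \cref{lem:preprocessing}, which is already built in $\Oh(m)$ time. Starting from $z=m$, we repeatedly read $p=\operatorname{spf}(z)$ and divide. This needs $\operatorname{spf}(m)$ itself, but the table covers only $z<m$. So we first trial-divide $m$ by the primes $p<m$ taken in increasing order from the sieve (these are the $z$ with $\operatorname{spf}(z)=z$), and stop at the first divisor. This costs $\Oh(m)$, or we may simply add one extra sieve entry. Every later quotient $z<m$ is then in the table. The peeling uses $\Oh(\log m)$ divisions and produces the distinct primes $p_1<\cdots<p_s$ with exponents $a_i$. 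They already come out in increasing order, since each step removes the smallest remaining prime factor.

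The stage list then follows directly. We compute the prefixes $A_i$ and the moduli $M=A_ip_i^b$ for $1\le b\le a_i$, together with the scales $\lambda=m/M$. There are $\sum_i a_i=\Oh(\log m)$ stages, each costing $\Oh(1)$ arithmetic.

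Next we assign each present gcd class $g\in G$ to its first eligible stage. By \cref{lem:stage} the stage depends only on $g$: it is the first stage in the list with $\lambda\mid g$. Scanning all stages for each class would cost $\Oh(|G|\log m)=\Oh(\sqrt m\log m)=\Oh(m)$, using $|G|\le\tau(m)\le2\sqrt m$ from \cref{lem:preprocessing}. That is already within budget, so no cleverness is needed. We then bucket the classes by stage index with an array of $\Oh(\log m)$ lists, appending in any order, since \cref{ex:stages12} allows any class order within a stage.

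Finally we group the records. \cref{lem:preprocessing} already buckets present records by $D(x)=\gcd(x,m)$ in one $\Oh(m)$ scan, for example with a counting sort keyed by $g\in[1,m)$. Inside each class, the records are distinct residues and each is processed with all its copies consecutively, so no further ordering is required. The total work is $\Oh(m)$ for the sieve and grouping, plus lower-order terms for factoring and stage assignment. The space is $\Oh(m)$ words for the tables and buckets.

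The only delicate point is the boundary case $\operatorname{spf}(m)$ when $m$ is prime or when the table excludes $m$; the fallback trial division above handles it in $\Oh(m)$. The case $m=1$ returns before any of this.
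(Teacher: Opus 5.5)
Your proposal is correct and follows the paper's argument: at most $\lfloor\log_2 m\rfloor$ stages, a scan of the scales for each of the at most $\tau(m)\le2\sqrt m$ gcd classes at total cost $\Oh(\tau(m)\log(m+1))=\Oh(m)$, and record grouping inherited from \cref{lem:preprocessing}. The one difference is factoring, where the paper simply trial-divides $m$ in $\Oh(\sqrt m)$ operations; your route through the $\operatorname{spf}$ table is valid, but its fallback is itself trial division, which would suffice on its own if stopped at $\sqrt m$.
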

\begin{proof}
Trial division factors $m$ in $\Oh(\sqrt m)$ word operations.
Each stage multiplies the working modulus by a prime, hence by at least
$2$. Starting from $1$ and ending at $m$ therefore takes at most
$\lfloor\log_2 m\rfloor$ stages. For each of at most
$\tau(m)\le2\sqrt m$ nonempty gcd classes, scan the scales to find
its first eligible stage. This costs
$\Oh(\tau(m)\log(m+1))=\Oh(m)$; class construction is covered by
\cref{lem:preprocessing}.
\end{proof}

\section{Witness reconstruction and total complexity}\label{sec:witness}

We use original coordinates for first-discovery records. If original input
$x=\lambda y$ first reaches working residue $z$ at scale $\lambda$, we store
\begin{equation}\label{eq:globalparents}
 \Par_{\!0}[\lambda z]=\lambda((z-y)\bmod M),\qquad \Val_{\!0}[\lambda z]=x.
\end{equation}
These are the local parent equations~\eqref{eq:parents}, preserved by
all later embeddings. In Example~\ref{ex:stages12}, one parent path is
\[
 \underbrace{0}_{\text{time }0}
 \xrightarrow{\ +6\ }\underbrace{6}_{\text{time }1}
 \xrightarrow{\ +3\ }\underbrace{9}_{\text{time }2}
 \xrightarrow{\ +4\ }\underbrace{1}_{\text{time }3}
 \qquad(\bmod 12).
\]

\begin{proposition}[Compact witness]\label[proposition]{prop:witness}
For any target, we use membership and the parent records to either report
that it is unreachable or construct a witness using at most $m-1$ distinct
input copies, in $\Oh(m)$ time and auxiliary words.
\end{proposition}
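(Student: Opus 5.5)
The plan is to follow parent pointers from the target back to $0$ and charge each step to a distinct input copy. If $F[t]=0$, we report that $t$ is unreachable; this is correct by the exactness of membership (\cref{prop:update}, \cref{lem:linearruns}) and by \cref{lem:cap}. Otherwise we start at $z_0=t$ and repeatedly set $z_{j+1}=\Par_{\!0}[z_j]$, adding one to a counter $b_{\Val_{\!0}[z_j]}$, until we reach $0$. By \eqref{eq:globalparents} and the embedding identity \eqref{eq:embedding}, each step satisfies $z_j\equiv z_{j+1}+\Val_{\!0}[z_j]\pmod m$ in original coordinates. Telescoping along the path therefore gives $\sum_x b_xx\equiv t$.

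Termination and the length bound use discovery times. Every nonzero residue is assigned a parent exactly once, at the update that first reaches it. By \eqref{eq:parents}, that parent already belonged to $S$ before the update, so it was discovered at a strictly earlier update, or it is $0$ at time $0$. Hence discovery times strictly decrease along the path, and the path visits distinct residues. It therefore has at most $m-1$ steps and takes $\Oh(m)$ time.

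The main obstacle is showing that the counts respect the caps, $b_x\le\widehat c_x\le c_x$. Each executed update uses one specific copy of one record, and it discovers the residues it discovers at one common time. The path visits at most one residue per discovery time, since times strictly decrease. So each executed update contributes at most one step, and the steps labelled by value $x$ correspond to distinct executed updates with that value. The loop in \cref{alg:main} executes at most $\widehat c_x$ updates for record $x$, so $b_x\le\widehat c_x$. This argument needs each copy to be a separate update, so that one copy cannot be counted twice. To identify the copies explicitly, we also store the update index, or the copy number within the record, in $\Val_{\!0}$ at discovery time. That costs $\Oh(1)$ per discovery, and each traversed step then names a distinct copy number, $1\le j\le b_x$.

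The overall bound $\sum_xb_x\le m-1$ is the path length. Space is the $\Oh(m)$ arrays $F,\Par_{\!0},\Val_{\!0}$ together with the $b$ counters indexed by residue. For \cref{cor:explicit}, a second scan of the list maps each requested copy number of value $x$ to its original indices, in $\Oh(n+m)$ time.
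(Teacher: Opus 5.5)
Your proposal is correct and follows essentially the same route as the paper: check $F[t]$, follow first-discovery parents, use strictly decreasing discovery times to bound the path by $m-1$ edges and to charge each edge to a distinct executed update (hence $b_x\le\widehat c_x\le c_x$), and telescope the parent equations to get $\sum_x b_xx\equiv t\pmod m$. Storing copy numbers at discovery time is harmless but unnecessary, since copies within a record are interchangeable and the paper simply returns the first $b_x$ copy labels of each record.
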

\begin{proof}
If $F[t]=0$, the target is unreachable. Otherwise, we follow parents
from $t$ to zero. Discovery times strictly decrease, so the path has at
most $m-1$ edges and uses each update, hence each input copy, at most once.
Let $b_x$ count path edges whose stored original value $\Val_{\!0}$ is
$x$; then $b_x\le\widehat c_x\le c_x$.
The parent equations telescope to $\sum_x b_xx\equiv t\pmod m$.
We return these counts, or the first $b_x$ copy labels in each record.
Tracing the path and collecting the counts takes $\Oh(m)$ time and words.
\end{proof}

\begin{proof}[Proof of \cref{thm:main}]
Capping preserves the contributions of every record
(\cref{lem:cap}); each nonzero gcd class has one stage
(\cref{lem:stage}). Embeddings preserve original sums, updates are exact
(\cref{prop:update}), and skipped copies are redundant
(\cref{lem:linearruns}). Induction gives exactly the full reachable set.
\Cref{prop:witness} proves the requested witness guarantee.

Each part takes $\Oh(m)$ time:
\begin{center}
\begin{tabular}{@{}ll@{}}
\toprule
Work & Proved in \\
\midrule
Preprocessing & \Cref{lem:preprocessing,lem:stagepre} \\
Stage rebuilding and full/empty cycles & \Cref{lem:transitions} \\
Loop checks, scans, and discoveries & \Cref{lem:linearruns} \\
Sorting & \Cref{lem:sorting} \\
Witness reconstruction & \Cref{prop:witness} \\
\bottomrule
\end{tabular}
\end{center}
Counts, membership, parents, inverse tables, buckets, cycle records,
intervals, and shared sorting buffers each occupy $\Oh(m)$ words.
Only consecutive old/new states coexist, so peak auxiliary space is
$\Oh(m)$.
\end{proof}

\begin{proof}[Proof of \cref{cor:explicit}]
Build the capped histogram in $\Oh(n+m)$ time and apply \cref{thm:main}.
Rescan the input, selecting the first $b_x$ indices of each residue $x$.
This gives distinct original indices in $\Oh(n+m)$ total time and
$\Oh(m)$ auxiliary words, excluding original input storage.
\end{proof}

\section{Implementation and experiments}\label{sec:validation}

We compare our C++17 implementation with basic 64-bit bitset DP and the
hashed and deterministic shift-tree algorithms of
Pot\k{e}pa~\cite{Potepa2021}.
All four compute the full reachable set and witness parents. The hashed
shift tree is Monte Carlo; the other three algorithms are deterministic.
With zero-growth skipping, bitset DP executes $\Oh(m)$ updates, each
scanning $\lceil m/w\rceil$ words. Its worst-case time is
$\Oh(m+m^2/w)$, with $w=64$.

\paragraph{Validation.}
We compare each implementation with an independent Boolean DP on 3,883
inputs: all 3,279 multiplicity vectors with $1\le m\le7$ and counts in
$\{0,1,2\}$,
404 targeted cases at moduli up to $1{,}025$, and 200 seeded random cases.
The targeted cases include empty inputs, very large counts, prime powers,
and changes of gcd class. For each run, we check the full reachable set,
replay every growth-producing update against the Bellman recurrence in the
solver's processing order, and query every target. Each returned witness
must use available copies and sum to its target; unreachable targets must
return no witness.

\paragraph{Benchmarks.}
We use five size anchors:
\[
A\in\{1,\,2.5,\,5,\,7.5,\,10\}\times10^5.
\]
For each, we choose the largest prime at most $A$ and the composite
$30\lfloor A/30\rfloor$. At each modulus $m$, we generate three inputs:
\begin{itemize}
\item \emph{Random support:} $\lfloor m/4\rfloor$ distinct nonzero residues,
      sampled uniformly, with one copy of each.
\item \emph{Repeated generator:} the single compact record $(1,m-1)$.
\item \emph{Powers of three:} one copy of each $3^j<m$, starting with $3^0=1$.
\end{itemize}
The algorithms receive the same 30 fixed inputs and use their prescribed
processing orders. Each algorithm has one warm-up and three timed runs
per input. Outside the timer, membership is checked against bitset DP
(analytically for the repeated generator), every parent edge is verified,
and five target witnesses are reconstructed.

We use GCC 13.3.0 with \texttt{-std=c++17 -O3 -DNDEBUG} on one pinned logical CPU
of an AMD Ryzen 7 5800H under Linux/WSL2. Time includes preprocessing,
allocation, and construction of the reachable set and parents; input
generation and correctness checks are excluded. The plots use linear axes,
medians of three runs, and minimum-to-maximum bars for the same input.

\paragraph{Observations.}
Our algorithm is substantially faster than both shift-tree implementations
on every tested input (\cref{fig:runtime}). Bitset DP is faster on random
support and powers of three: random support reaches full coverage after
a few updates, while inputs made of powers of three contain only $\Oh(\log m)$ items.
Both require few passes over the bitset. On repeated generators, however,
bitset DP performs $m-1$ full-bitset scans, taking $\Theta(m^2/w)$ time,
quadratic for fixed $w$. \Cref{fig:linear-runtime} shows that our algorithm's
runtime grows roughly linearly with $m$ over the plotted range.

\begingroup
\setlength{\intextsep}{6pt}
\begin{figure}[H]
\centering
\setlength{\abovecaptionskip}{4pt}
\includegraphics[width=0.87\textwidth]{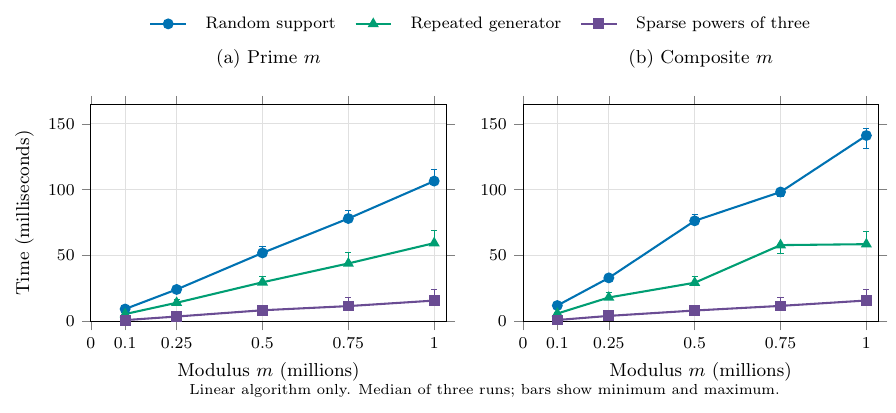}
\caption{Our linear algorithm in milliseconds; prime and composite panels share a scale.}
\label{fig:linear-runtime}
\end{figure}

\begin{figure}[H]
\centering
\setlength{\abovecaptionskip}{4pt}
\includegraphics[width=\textwidth]{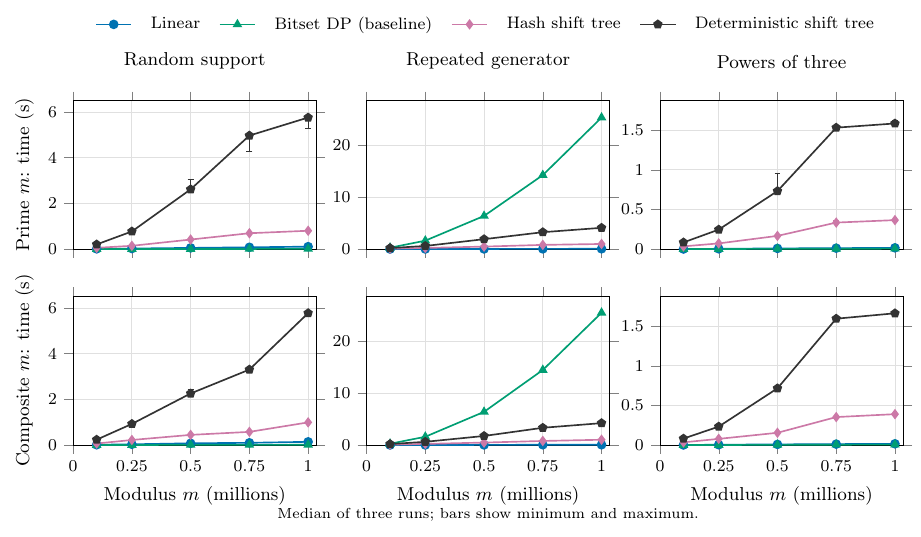}
\caption{All algorithms in seconds. Columns show the three input families; rows show prime and composite moduli.}
\label{fig:runtime}
\end{figure}
\endgroup

\FloatBarrier
\bibliographystyle{alpha}
\bibliography{references}

\end{document}